\newtheorem{theorem}{Theorem}
\newtheorem{lemma}{Lemma}
\newtheorem{definition}{Definition}
\providecommand{\eref}[1]{\eqref{#1}}  
\providecommand{\cref}[1]{Chapter~\ref{#1}}
\providecommand{\fref}[1]{Figure~\ref{#1}}
\providecommand{\thref}[1]{Theorem~\ref{#1}}
\providecommand{\E}{\ensuremath{\mathbb{E}}}
\providecommand{\calS}{\mathcal{S}}
\providecommand{\calO}{\mathcal{O}}
\providecommand{\calG}{\mathcal{G}}
\providecommand{\calE}{\mathcal{E}}
\providecommand{\wbar}{\overline{w}}
\providecommand{\what}{\widehat{w}}
\providecommand{\dhat}{\widehat{d}}
\providecommand{\rhat}{\widehat{r}}
\providecommand{\chat}{\widehat{c}}
\providecommand{\Bhat}{\widehat{B}}
\providecommand{\Vbar}{\overline{V}}
\providecommand{\Var}{\mathrm{Var}}
\providecommand{\Cov}{\mathrm{Cov}}
\newcommand{\defequal}{\mathop{\overset{\mbox{\tiny{def}}}{=}}}
\providecommand{\MSE}{\mathrm{MSE}}
\providecommand{\MAE}{\mathrm{MAE}}
\title{Stochastic blockmodel approximation of a graphon: Theory and consistent estimation\protect\thanks{This paper appears in the proceedings of NIPS 2013. In this version we include an appendix with proofs.}}
\author{
Edoardo M. Airoldi \\
Dept. Statistics\\
Harvard University
\And
Thiago B. Costa \\
SEAS, and Dept. Statistics\\
Harvard University
\And
Stanley H.~Chan \\
SEAS, and Dept. Statistics\\
Harvard University
}
\begin{document}

\maketitle

\begin{abstract}
Non-parametric approaches for analyzing network data based on exchangeable graph models (ExGM) have recently gained interest. The key object that defines an ExGM is often referred to as a {\em graphon}. This non-parametric perspective on network modeling poses challenging questions on how to make inference on the graphon underlying  observed network data. In this paper, we propose a computationally efficient procedure to estimate a graphon from a set of observed networks generated from it. This procedure is based on a stochastic blockmodel approximation (SBA) of the graphon. We show that, by approximating the graphon with a stochastic block model, the graphon can be consistently estimated, that is, the estimation error vanishes as the size of the graph approaches infinity.
\end{abstract}

\section{Introduction}
Revealing hidden structures of a graph is the heart of many data analysis problems. From the well-known small-world network to the recent large-scale data collected from online service providers such as Wikipedia, Twitter and Facebook, there is always a momentum in seeking better and more informative representations of the graphs \citep{Fien:Meye:Wass:1985,Nowi:Snij:2001,Hoff_Raftery_Handcock_2002,Hand:Raft:Tant:2007,Airoldi_Blei_Fienberg_2008,Xu_Yan_Qi_2012,Azari:2012fk,Tang_Sussman_Priebe_2013,Goldenberg_Zheng_Fienberg_2009,kola:2009}. In this paper, we develop a new computational tool to study one type of non-parametric representations which recently draws significant attentions from the community  \citep{Bickel_Chen_2009,Lloyd_Orbanz_Ghahramani_2012,bickel_2011,Zhao_2011,OR13}.

The root of the non-parametric model discussed in this paper is in the theory of exchangeable random arrays \citep{Aldous_1981,Hoover_1979,Kallenberg_1989}, and it is presented in \citep{Diaconis_Janson_2007} as a link connecting de Finetti's work on partial exchangeability and graph limits \citep{Lovasz_Szegedy_2006,Borgs_Chayes_Lovasz_2006}. In a nutshell, the theory predicts that every convergent sequence of graphs $(G_n)$ has a limit object that preserves many local and global properties of the graphs in the sequence. This limit object, which is called a \emph{graphon}, can be represented by measurable functions $w: [0,1]^2 \rightarrow [0,1]$, in a way that any $w'$ obtained from measure preserving transformations of $w$ describes the same graphon.

Graphons are usually seen as kernel functions for random network models \citep{Lawrence_2005}.  
To construct an $n$-vertex random graph $\calG(n,w)$ for a given $w$, we first assign a random label $u_i \sim \mbox{Uniform}[0,1]$ to each vertex $i \in \{1,\ldots,n\}$, and connect any two vertices $i$ and $j$ with probability $w(u_i,u_j)$, \emph{i.e.},
\begin{equation}
\Pr\left( G[i,j] = 1 \;|\; u_i, u_j\right) = w(u_i, u_j), \quad\quad i,j = 1,\ldots,n,
\label{eq:defineG}
\end{equation}
where $G[i,j]$ denotes the $(i,j)$th entry of the adjacency matrix representing a particular realization of $\calG(n,w)$ (See \fref{fig:illustration of graphon}). As an example, we note that the stochastic block-model is the case where $w(x,y)$ is a piecewise constant function.

\begin{figure}[t]
\centering
\includegraphics[width=\linewidth]{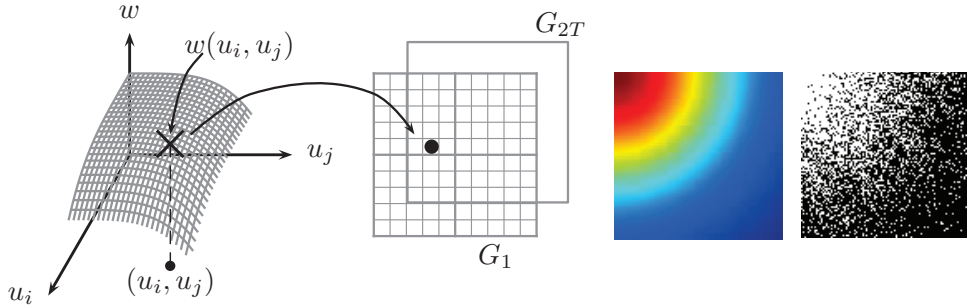}
\caption{[Left] Given a graphon $w: [0,1]^2 \rightarrow [0,1]$, we draw i.i.d. samples $u_i$, $u_j$ from $\mbox{Uniform[0,1]}$ and assign $G_t[i,j] = 1$ with probability $w(u_i,u_j)$, for $t = 1,\ldots,2T$. [Middle] Heat map of a graphon $w$. [Right] A random graph generated by the graphon shown in the middle. Rows and columns of the graph are ordered by increasing $u_i$, instead of $i$ for better visualization.}
\label{fig:illustration of graphon}
\end{figure}

The problem of interest is  defined as follows: Given a sequence of $2T$ observed \emph{directed} graphs $G_1, \ldots, G_{2T}$, can we make an estimate $\what$ of $w$, such that $\what \rightarrow w$ with high probability as $n$ goes to infinity? This question has been loosely attempted in the literature, but none of which has a complete solution. For example, Lloyd et al. \citep{Lloyd_Orbanz_Ghahramani_2012} proposed a Bayesian estimator without a consistency proof; Choi and Wolfe \citep{Choi_Wolfe_2012} studied the consistency properties, but did not provide algorithms to estimate the graphon. To the best of our knowledge, the only method that estimates graphons consistently, besides ours, is USVT \citep{Chatterjee2012}. However, our algorithm has better complexity and outperforms USVT in our simulations. 
More recently, other groups have begun exploring approaches related to ours  \citep{Wolfe_Olhede_2013,Latouche_Robin_2013}.

The proposed approximation procedure  requires $w$ to be piecewise Lipschitz. The basic idea is to approximate $w$ by a two-dimensional step function $\what$ with diminishing intervals as $n$ increases.The proposed method is called the Stochastic blockmodel approximation (SBA) algorithm, as the idea of using a two-dimensional step function for approximation is equivalent to using the stochastic block models \citep{Choi_Wolfe_Airoldi_2012,Nowi:Snij:2001,Hoff_2008,Channarond_Daudin_Robin_2012,rohe_2011}. The SBA algorithm is defined up to permutations of the nodes, so the estimated graphon is \emph{not} canonical. However, this does not affect the consistency properties of the SBA algorithm, as the consistency is measured w.r.t. the graphon that generates the graphs.

\section{Stochastic blockmodel approximation: Procedure}
\label{sec:proposed method}
In this section we present the proposed SBA algorithm and discuss its basic properties.

\subsection{Assumptions on graphons}
We assume that $w$ is \emph{piecewise Lipschitz}, \emph{i.e.}, there exists a sequence of non-overlaping intervals $I_k = [\alpha_{k-1},\,\alpha_k]$ defined by
$0 = \alpha_0 < \ldots < \alpha_K = 1$, and a constant $L>0$ such that, for any $(x_1,y_1)$ and $(x_2,y_2) \in I_{ij} = I_i \times I_j$,
\begin{equation*}
|w(x_1,y_1) - w(x_2,y_2)| \le L \left(|x_1 - x_2| + |y_1 - y_2|\right).
\end{equation*}
For generality we assume $w$ to be asymmetric \emph{i.e.}, $w(u,v) \not= w(v,u)$, so that symmetric graphons can be considered as a special case. Consequently, a random graph $\calG(n,w)$ generated by $w$ is directed, \emph{i.e.}, $G[i,j] \not= G[j,i]$.

\subsection{Similarity of graphon slices}
The intuition of the proposed SBA algorithm is that if the graphon is smooth, neighboring cross-sections of the graphon should be similar. In other words, if two labels $u_i$ and $u_j$ are close \emph{i.e.}, $|u_i - u_j| \approx 0$, then the difference between the row slices $|w(u_i,\cdot)-w(u_j,\cdot)|$ and the column slices $|w(\cdot,u_i)-w(\cdot,u_j)|$ should also be small.
To measure the similarity between two labels using the graphon slices, we define the following distance
\begin{equation}
d_{ij} = \frac{1}{2} \left(\int_0^1 \left[w(x,u_i)-w(x,u_j)\right]^2 dx + \int_0^1 \left[w(u_i,y)-w(u_j,y)\right]^2 dy\right).
\end{equation}
Thus, $d_{ij}$ is small only if both row and column slices of the graphon are similar.

The usage of $d_{ij}$ for graphon estimation will be discussed in the next subsection. But before we proceed, it should be noted that in practice $d_{ij}$ has to be estimated from the observed graphs $G_1, \ldots, G_{2T}$. To derive an estimator $\dhat_{ij}$ of $d_{ij}$, it is helpful to express $d_{ij}$ in a way that the estimators can be easily obtained. To this end, we let
\begin{align*}
c_{ij} =  \int_0^1 w(x,u_i)w(x,u_j)dx \quad\quad\mbox{and}\quad\quad r_{ij} = \int_0^1w(u_i,y)w(u_j,y)dy,
\end{align*}
and express $d_{ij}$ as $d_{ij} = \frac{1}{2}\Big[ (c_{ii} - c_{ij} - c_{ji} + c_{jj}) + (r_{ii} - r_{ij} - r_{ji} + r_{jj})\Big]$. Inspecting this expression, we consider the following estimators for $c_{ij}$ and $r_{ij}$:
\begin{align}
\chat_{ij}^k &= \frac{1}{T^2}\left( \sum_{1\le t_1 \le T}G_{t_1}[k,i]\right)\left( \sum_{T < t_2 \le 2T}G_{t_2}[k,j]\right), \label{eq:chat}\\
\rhat_{ij}^k &= \frac{1}{T^2}\left( \sum_{1\le t_1 \le T}G_{t_1}[i,k]\right)\left( \sum_{T < t_2 \le 2T}G_{t_2}[j,k]\right). \label{eq:rhat}
\end{align}
Here, the superscript $k$ can be interpreted as the dummy variables $x$ and $y$ in defining $c_{ij}$ and $r_{ij}$, respectively. Summing all possible $k$'s yields an estimator $\dhat_{ij}$ that looks similar to $d_{ij}$:
\begin{equation}
\dhat_{ij} = \frac{1}{2} \left[ \frac{1}{S} \sum_{k \in \calS} \left\{   \left(\rhat_{ii}^k - \rhat_{ij}^k - \rhat_{ji}^k + \rhat_{jj}^k\right) + \left(\chat_{ii}^k - \chat_{ij}^k - \chat_{ji}^k + \chat_{jj}^k\right)  \right\} \right],
\end{equation}
where $\calS = \{1,\ldots,n\}\backslash\{i,j\}$ is the set of summation indices.

The motivation of defining the estimators in \eref{eq:chat} and \eref{eq:rhat} is that a row of the adjacency matrix $G[i,\cdot]$ is fully characterized by the corresponding row of the graphon $w(u_i,\cdot)$. Thus the expected value of $\frac{1}{T}\left( \sum_{1\le t_1 \le T}G_{t_1}[i,\cdot]\right)$ is  $w(u_i,\cdot)$, and hence $\frac{1}{S}\sum_{k \in \calS} \rhat_{ij}^k$ is an estimator for $r_{ij}$. To theoretically justify this intuition, we will show in Section \ref{sec:theory} that $\dhat_{ij}$ is indeed a good estimator: it is not only unbiased, but is also concentrated round $d_{ij}$ for large $n$. Furthermore, we will show that it is possible to use a random subset of $\calS$ instead of $\{1,\ldots,n\}\backslash\{i,j\}$ to achieve the same asymptotic behavior. As a result, the estimation of $d_{ij}$ can be performed locally in a neighborhood of $i$ and $j$, instead of all $n$ vertices.

\subsection{Blocking the vertices}
The similarity metric $\dhat_{ij}$ discussed above suggests one simple method to approximate $w$ by a piecewise constant function $\what$ (\emph{i.e.}, a stochastic block-model). Given $G_{1},\ldots,G_{2T}$, we can cluster the (unknown) labels $\{u_1,\ldots,u_n\}$ into $K$ blocks $\Bhat_1,\ldots,\Bhat_K$ using a procedure described below. Once the blocks $\Bhat_1,\ldots,\Bhat_K$ are defined, we can then determine $\what(u_i,u_j)$ by computing the empirical frequency of edges that are present across blocks $\Bhat_i$ and $\Bhat_j$:
\begin{equation}
\what(u_i,u_j) = \frac{1}{|\Bhat_i|\,|\Bhat_j|} \sum_{i_x \in \Bhat_i} \sum_{j_y \in \Bhat_j} \frac{1}{2T} \left( G_1[i_x,j_y] + G_2[i_x,j_y] + \ldots + G_{2T}[i_x,j_y]\right),
\label{eq:what}
\end{equation}
where $\Bhat_i$ is the block containing $u_i$ so that summing $G_t[x,y]$ over $x \in \Bhat_i$ and $y \in \Bhat_j$ yields an estimate of the expected number of edges linking block $\Bhat_i$ and $\Bhat_j$.

To cluster the unknown labels $\{u_1,\ldots,u_n\}$ we propose a greedy approach as shown in Algorithm \ref{alg:SBA}. Starting with $\Omega = \{u_1,\ldots,u_n\}$, we randomly pick a node $i_p$ and call it the \emph{pivot}. Then for all other vertices $i_v \in \Omega\backslash\{i_p\}$, we compute the distance $\dhat_{i_p,i_v}$ and check whether $\dhat_{i_p,i_v} < \Delta^2$ for some precision parameter $\Delta>0$. If $\dhat_{i_p,i_v} < \Delta^2$, then we assign $i_v$ to the same block as $i_p$. Therefore, after scanning through $\Omega$ once, a block $\Bhat_1 = \{i_p, i_{v_1}, i_{v_2}, \ldots\}$ will be defined. By updating $\Omega$ as $\Omega \leftarrow \Omega\backslash \Bhat_1$, the process repeats until $\Omega = \emptyset$.

The proposed greedy algorithm is only a local solution in a sense that it does not return the globally optimal clusters. However, as will be shown in Section \ref{sec:theory}, although the clustering algorithm is not globally optimal, the estimated graphon $\what$ is still guaranteed to be a consistent estimate of the true graphon $w$ as $n \rightarrow \infty$. Since the greedy algorithm is numerically efficient, it serves as a practical computational tool to estimate $w$.

\subsection{Main algorithm}
\begin{algorithm}[ht]
\centering
\caption{Stochastic blockmodel approximation}
\begin{algorithmic}
\STATE Input: A set of observed graphs $G_1, \ldots, G_{2T}$ and the precision parameter $\Delta$.
\STATE Output: Estimated stochastic blocks $\Bhat_1, \ldots, \Bhat_K$.
\STATE Initialize: $\Omega = \{1,\ldots,n\}$, and $k = 1$.
\WHILE{$\Omega \not= \emptyset$}
    \STATE Randomly choose a vertex $i_p$ from $\Omega$ and assign it as the pivot for $\Bhat_k$: $\Bhat_k \leftarrow i_p$.
    \FOR {Every other vertices $i_v \in \Omega \backslash \{i_p\}$}
        \STATE Compute the distance estimate $\dhat_{i_p, i_v}$.
        \STATE If $\dhat_{i_p, i_v} \le \Delta^2$, then assign $i_v$ as a member of $\Bhat_k$: $\Bhat_k \leftarrow i_v$.
    \ENDFOR
    \STATE Update $\Omega$: $\Omega \leftarrow \Omega \backslash \Bhat_k$.
    \STATE Update counter: $k \leftarrow k+1$.
\ENDWHILE
\end{algorithmic}
\label{alg:SBA}
\end{algorithm}
Algorithm \ref{alg:SBA} illustrates the pseudo-code for the proposed stochastic block-model approximation. The complexity of this algorithm is $\calO( T S K n)$, where $T$ is half the number of observations, $S$ is the size of the neighborhood, $K$ is the number of blocks and $n$ is number of vertices of the graph.

\section{Stochastic blockmodel approximation: Theory of estimation}
\label{sec:theory}

In this section we present the theoretical aspects of the proposed SBA algorithm. We will first discuss the properties of the estimator $\dhat_{ij}$, and then show the consistency of the estimated graphon $\what$. Details of the proofs can be found in the supplementary material.

\subsection{Concentration analysis of $\dhat_{ij}$}
Our first theorem below shows that the proposed estimator $\dhat_{ij}$ is both unbiased, and is concentrated around its expected value $d_{ij}$.
\begin{theorem}
\label{thm:dhatij}
The estimator $\dhat_{ij}$ for $d_{ij}$ is unbiased, i.e., $\E[\dhat_{ij}] = d_{ij}$. Further, for any $\epsilon > 0$,
\begin{equation}
\Pr\left[ \left|\dhat_{ij}  - d_{ij}\right| > \epsilon \right] \le 8e^{-\frac{S\epsilon^2}{32/T+8\epsilon/3}},
\end{equation}
where $S$ is the size of the neighborhood $\calS$, and $2T$ is the number of observations.
\end{theorem}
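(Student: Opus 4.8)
The plan is to carry out the whole argument conditionally on the node labels $u_1,\dots,u_n$, exploiting the defining property of the graphon model: given the labels, the entries $G_t[a,b]$ are independent $\mathrm{Bernoulli}(w(u_a,u_b))$ random variables, and entries from distinct observations $G_t,G_{t'}$ are independent. For the unbiasedness claim, fix $k\in\calS$ and note that $\chat_{ij}^k$ is the product of the average of $G_t[k,i]$ over $t\le T$ and the average of $G_t[k,j]$ over $T<t\le 2T$; these two averages use disjoint sets of observations, so they are independent given the labels and $\E[\chat_{ij}^k\mid u_1,\dots,u_n]=w(u_k,u_i)w(u_k,u_j)$. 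Averaging over $u_k\sim\mathrm{Uniform}[0,1]$ gives $\E[\chat_{ij}^k\mid u_i,u_j]=\int_0^1 w(x,u_i)w(x,u_j)\,dx=c_{ij}$, and likewise $\E[\rhat_{ij}^k\mid u_i,u_j]=r_{ij}$. Since these do not depend on $k$, averaging over $\calS$ leaves them unchanged, and substituting into the definition of $\dhat_{ij}$ together with $d_{ij}=\tfrac12[(c_{ii}-c_{ij}-c_{ji}+c_{jj})+(r_{ii}-r_{ij}-r_{ji}+r_{jj})]$ gives $\E[\dhat_{ij}\mid u_i,u_j]=d_{ij}$.

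For the concentration bound, I would write $\dhat_{ij}=\tfrac1S\sum_{k\in\calS}X_k$ with $X_k=\tfrac12[(\rhat_{ii}^k-\rhat_{ij}^k-\rhat_{ji}^k+\rhat_{jj}^k)+(\chat_{ii}^k-\chat_{ij}^k-\chat_{ji}^k+\chat_{jj}^k)]$, and prove a Bernstein-type tail for this average. Two observations are needed. First, for $k\ne k'$ the variables $X_k$ and $X_{k'}$ are independent given the labels, because $X_k$ depends only on the entries $G_t[k,i],G_t[k,j],G_t[i,k],G_t[j,k]$, and these index sets are disjoint across $k$. Second, $\Var(X_k)=O(1/T)$: setting $\alpha_i=\tfrac1T\sum_{t\le T}G_t[k,i]$ and $\beta_i=\tfrac1T\sum_{T<t\le 2T}G_t[k,i]$ (and analogously for the row terms), one has the identity
\[
\chat_{ii}^k-\chat_{ij}^k-\chat_{ji}^k+\chat_{jj}^k=(\alpha_i-\alpha_j)(\beta_i-\beta_j),
\]
a product of two \emph{independent} factors, each an average of $T$ i.i.d.\ Bernoulli variables and hence of variance at most $1/(2T)$; the variance of the product is then at most $\Var(\alpha_i-\alpha_j)\,\Var(\beta_i-\beta_j)+[\,\E(\alpha_i-\alpha_j)\,]^2[\,\Var(\alpha_i-\alpha_j)+\Var(\beta_i-\beta_j)\,]=O(1/T)$. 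Since $|X_k|\le1$ as well, Bernstein's inequality for the $S$ independent bounded summands $X_k$ yields a tail of the shape $2\exp(-S\epsilon^2/(c_1/T+c_2\epsilon))$.

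Finally, the prefactor $8$ in the stated bound comes from a union bound over the eight constituent sums $\tfrac1S\sum_k\chat_{ii}^k,\ldots,\tfrac1S\sum_k\rhat_{jj}^k$: the event $|\dhat_{ij}-d_{ij}|>\epsilon$ forces at least one of these eight sums to deviate from its mean by more than a fixed fraction of $\epsilon$, and applying the Bernstein estimate above to each, and tracking the resulting constants, produces the $32/T$ and $8\epsilon/3$ in the denominator.

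The step I expect to be the real obstacle is the variance estimate in the concentration part. One has to (i) condition on the labels so that the cross-terms in $\sum_k X_k$ disappear and the summands become independent; (ii) use the split of the $2T$ observed graphs into the first $T$ and the last $T$ so that each $\chat_{ij}^k$ (equivalently the combination $(\alpha_i-\alpha_j)(\beta_i-\beta_j)$) is a product of two \emph{independent} averages of $T$ i.i.d.\ Bernoulli variables; and (iii) bound the variance of that product correctly so that only an $O(1/T)$ term survives. Everything else --- the conditional-mean computation, the union bound, and the invocation of Bernstein's inequality --- is routine bookkeeping.
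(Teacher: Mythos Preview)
Your unbiasedness argument is correct and essentially identical to the paper's: condition on the labels, use that $G_{t_1}[i,k]$ and $G_{t_2}[j,k]$ are conditionally independent because they come from disjoint halves of the observation set, then integrate out $u_k$.

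For the concentration part your overall plan (independence of the $X_k$ across $k$, an $O(1/T)$ variance bound, then Bernstein) is right, and your factorization
\[
\chat_{ii}^k-\chat_{ij}^k-\chat_{ji}^k+\chat_{jj}^k=(\alpha_i-\alpha_j)(\beta_i-\beta_j)
\]
is in fact a cleaner route to the $O(1/T)$ variance than what the paper does: the paper bounds $\Var[\rhat_{ij}^k]\le 2/T$ for each individual term by a brute-force three-case covariance computation over pairs $(t_1,t_2),(\tau_1,\tau_2)$. So your variance argument is a genuine and nicer alternative.

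Where your proposal goes wrong is the bookkeeping that produces the stated constants. The paper does \emph{not} union-bound over the eight individual sums $\frac{1}{S}\sum_k\chat_{ab}^k$ and $\frac{1}{S}\sum_k\rhat_{ab}^k$. If you do that, each term carries $\Var\le 2/T$, each two-sided Bernstein contributes a factor $2$, and eight terms give a prefactor $16$ together with $64/T$ in the denominator---not $8$ and $32/T$. What the paper actually does is \emph{pair} the row and column pieces: for a directed graph, $G_t[i,k]$ and $G_t[k,i]$ are independent, so $\rhat_{ab}^k$ and $\chat_{ab}^k$ are independent and
\[
\Var\Bigl[\tfrac{1}{2}\bigl(\rhat_{ab}^k+\chat_{ab}^k\bigr)\Bigr]=\tfrac{1}{4}\bigl(\Var[\rhat_{ab}^k]+\Var[\chat_{ab}^k]\bigr)\le \tfrac{1}{T}.
\]
Bernstein is then applied to the \emph{four} paired averages $\tfrac{1}{2}(\rhat_{ab}+\chat_{ab})$, $ab\in\{ii,ij,ji,jj\}$, each at level $\epsilon/4$, yielding $4\times 2\,e^{-S(\epsilon/4)^2/(2(1/T+(\epsilon/4)/3))}=8\,e^{-S\epsilon^2/(32/T+8\epsilon/3)}$. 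The ingredient you are missing is precisely this pairing and the independence of the row and column estimators in the directed case; without it one lands on the weaker bound the paper records separately for undirected graphs.
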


\begin{proof}
Here we only highlight the important steps to present the intuition. The basic idea of the proof is to zoom-in a microscopic term of $\rhat_{ij}^k$ and show that it is unbiased. To this end, we use the fact that $G_{t_1}[i,k]$ and $G_{t_2}[j,k]$ are conditionally independent on $u_k$ to show
\begin{align*}
\E[ G_{t_1}[i,k]G_{t_2}[j,k] \;|\; u_k] &= \Pr[G_{t_1}[i,k] = 1, G_{t_2}[j,k] = 1 \;|\; u_k]\\
&\overset{(a)}{=} \Pr[G_{t_1}[i,k] = 1 \;|\; u_k] \Pr[G_{t_2}[j,k] = 1 \;|\; u_k]\\
&= w(u_i,u_k)w(u_j,u_k),
\end{align*}
which then implies $\E[\rhat_{ij}^k \;|\; u_k] = w(u_i,u_k)w(u_j,u_k)$, and by iterated expectation we have $\E[\rhat_{ij}^k] = \E[ \E[\rhat_{ij}^k \;|\; u_k] ] = r_{ij}$. The concentration inequality follows from a similar idea to bound the variance of $\rhat_{ij}^k$ and apply Bernstein's inequality.
\end{proof}

That $G_{t_1}[i,k]$ and $G_{t_2}[j,k]$ are conditionally independent on $u_k$ is a critical fact for the success of the proposed algorithm. It also explains why at least 2 independently observed graphs are necessary, for otherwise we cannot separate the probability in the second equality above marked with $(a)$.

\subsection{Choosing the number of blocks}
The performance of the Algorithm \ref{alg:SBA} is sensitive to the number of blocks it defines. On the one hand, it is desirable to have more blocks so that the graphon can be finely approximated. But on the other hand, if the number of blocks is too large then each block will contain only few vertices. This is bad because in order to estimate the value on each block, a sufficient number of vertices in each block is required. The trade-off between these two cases is controlled by the precision parameter $\Delta$: a large $\Delta$ generates few large clusters, while small $\Delta$ generates many small clusters. A precise relationship between the $\Delta$ and $K$, the number of blocks generated the algorithm, is given in \thref{thm:number of blocks}.

\begin{theorem}
\label{thm:number of blocks}
Let $\Delta$ be the accuracy parameter and $K$ be the number of blocks estimated by Algorithm \ref{alg:SBA}, then
\begin{equation}
\Pr\left[ K > \frac{QL\sqrt{2}}{\Delta} \right] \le 8n^2 e^{-\frac{S\Delta^4}{128/T + 16\Delta^2/3}},
\end{equation}
where $L$ is the Lipschitz constant and $Q$ is the number of Lipschitz blocks in $w$.
\end{theorem}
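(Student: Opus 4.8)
The plan is to dominate the output of Algorithm~\ref{alg:SBA} by a \emph{fixed, deterministic} partition of $[0,1]$ and to show that, on a high-probability event, the greedy procedure cannot create more blocks than this partition has cells. So the proof splits into a purely combinatorial step (counting cells) and a concentration step (controlling the bad event), the two being glued by Theorem~\ref{thm:dhatij}.

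For the deterministic step, I would refine the $Q$ Lipschitz intervals of $w$ into subintervals $J_1,\dots,J_N$, each contained in a single Lipschitz block and of width at most $\Delta/(L\sqrt{2})$; up to rounding this requires $N\le QL\sqrt{2}/\Delta$ cells. The point of this width is that if $u_i$ and $u_j$ lie in a common cell $J_m$, then for every $x$ the pairs $(x,u_i)$ and $(x,u_j)$ lie in a common Lipschitz rectangle (splitting the $x$-integral over Lipschitz blocks if needed), so $|w(x,u_i)-w(x,u_j)|\le L|u_i-u_j|$ and similarly for the column slice; substituting into the definition of $d_{ij}$ gives $d_{ij}\le L^2|u_i-u_j|^2\le \Delta^2/2$.

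Next I would introduce the event $\calE$ that $|\dhat_{ij}-d_{ij}|\le \Delta^2/2$ for every pair $\{i,j\}$ whose labels lie in a common cell $J_m$. On $\calE$, each such pair satisfies $\dhat_{ij}\le d_{ij}+\Delta^2/2\le \Delta^2$, i.e.\ the acceptance test of Algorithm~\ref{alg:SBA} always succeeds for two vertices inside the same cell. Hence, when the algorithm picks a pivot $i_p$ with $u_{i_p}\in J_m$, every not-yet-assigned vertex with label in $J_m$ passes the test and is absorbed into the current block, so afterwards $\Omega$ contains no vertex with label in $J_m$. Thus each cell can supply at most one pivot over the whole run, and since the number of pivots equals $K$, we get $K\le N\le QL\sqrt{2}/\Delta$ on $\calE$. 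Finally, for $\Pr[\calE^c]$, fix an in-cell pair (so $d_{ij}\le\Delta^2/2$) and apply Theorem~\ref{thm:dhatij} with $\epsilon=\Delta^2/2$:
\begin{equation*}
\Pr\!\left[|\dhat_{ij}-d_{ij}|>\tfrac{\Delta^2}{2}\right]\le 8\exp\!\left(-\frac{S(\Delta^2/2)^2}{32/T+8(\Delta^2/2)/3}\right)=8\exp\!\left(-\frac{S\Delta^4}{128/T+16\Delta^2/3}\right);
\end{equation*}
a union bound over the at most $\binom{n}{2}<n^2$ relevant pairs gives $\Pr[\calE^c]\le 8n^2\exp\!\big(-S\Delta^4/(128/T+16\Delta^2/3)\big)$, and since $\{K>QL\sqrt{2}/\Delta\}\subseteq\calE^c$ the theorem follows.

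The step I expect to need the most care is the deterministic claim that, on $\calE$, no cell can host two distinct pivots: it relies on the test $\dhat\le\Delta^2$ being one-sided (it never fails within a cell) together with the fact that a block, once formed, is permanently removed from $\Omega$. A minor technical point is that Theorem~\ref{thm:dhatij} is stated for a fixed neighborhood $\calS$ of size $S$; if Algorithm~\ref{alg:SBA} draws the neighborhood at random, one conditions on those draws (or unions over them) without changing the stated rate.
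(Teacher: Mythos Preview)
Your proposal is correct and follows essentially the same argument as the paper: both partition $[0,1]$ into at most $QL\sqrt{2}/\Delta$ subintervals of width at most $\Delta/(L\sqrt{2})$, use the Lipschitz condition to get $d_{ij}\le\Delta^2/2$ for same-cell pairs, and then apply Theorem~\ref{thm:dhatij} with $\epsilon=\Delta^2/2$ together with a union bound over $n^2$ pairs. The only cosmetic difference is the direction of the implication: the paper argues by pigeonhole that $K>QL\sqrt{2}/\Delta$ forces two pivots into the same cell and hence $\dhat_{p_i,p_j}-d_{p_i,p_j}>\Delta^2/2$, whereas you argue the contrapositive, showing that on the good concentration event each cell is emptied the first time it supplies a pivot.
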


In practice, we estimate $\Delta$ using a cross-validation scheme to find the optimal 2D histogram bin width \citep{Wasserman_2005}. The idea is to test a sequence of potential values of $\Delta$ and seek the one that minimizes the cross validation risk, defined as
\begin{equation}
\widehat{J}(\Delta)=\frac{2}{h(n-1)} - \frac{n+1}{h(n-1)}\sum_{j=1}^K \widehat{p}_j^2,
\end{equation}
where $\widehat{p}_j = |\Bhat_j|/n$ and $h = 1/K$. 
Algorithm \ref{alg:CrossValidation} details the proposed cross-validation scheme.

\begin{algorithm}
\caption{Cross Validation}
\begin{algorithmic}
\STATE Input: Graphs $G_1,\ldots,G_{2T}$.
\STATE Output: Blocks $\Bhat_1,\ldots,\Bhat_K$, and optimal $\Delta$.
\FOR{a sequence of $\Delta$'s}
    \STATE Estimate blocks $\Bhat_1,\ldots,\Bhat_K$ from $G_1,\ldots,G_{2T}$. [Algorithm \ref{alg:SBA}]
    \STATE Compute $\widehat{p}_j = |\Bhat_j|/n$, for $j=1,\ldots,K$.
    \STATE Compute $\widehat{J}(\Delta)=\frac{2}{h(n-1)} - \frac{n+1}{h(n-1)}\sum_{j=1}^K \widehat{p}_j^2$, with $h = 1/K$.
\ENDFOR
\STATE Pick the $\Delta$ with minimum $\widehat{J}(\Delta)$, and the corresponding $\Bhat_1,\ldots,\Bhat_K$.
\end{algorithmic}
\label{alg:CrossValidation}
\end{algorithm}

\subsection{Consistency of $\what$}
The goal of our next theorem is to show that $\what$ is a consistent estimate of $w$, \emph{i.e.}, $\what \rightarrow w$ as $n \rightarrow \infty$. To begin with, let us first recall two commonly used metric:
\begin{definition}
\label{def:MSE and MAE}
The mean squared error (MSE) and mean absolute error (MAE) are defined as
\begin{align*}
\MSE(\what) &= \frac{1}{n^2} \sum_{i_v = 1}^n \sum_{j_v = 1}^n \left( w(u_{i_v}, u_{j_v}) - \what(u_{i_v}, u_{j_v}) \right)^2\\
\MAE(\what) &= \frac{1}{n^2} \sum_{i_v = 1}^n \sum_{j_v = 1}^n \left| w(u_{i_v}, u_{j_v}) - \what(u_{i_v}, u_{j_v}) \right|.
\end{align*}
\end{definition}

\begin{theorem}
\label{thm:consistency}
If $S \in \Theta(n)$ and $\Delta \in \omega \left( \left(\frac{\log(n)}{n}\right)^{\frac{1}{4}} \right) \cap o(1)$, then
\begin{equation*}
\lim_{n\rightarrow \infty} \E[ \MAE(\what)] = 0 \quad\quad\mbox{and}\quad\quad \lim_{n\rightarrow \infty} \E[ \MSE(\what)] = 0.
\end{equation*}
\end{theorem}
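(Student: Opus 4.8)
Since $\MAE(\what)^2 = \big(\frac{1}{n^2}\sum_{i,j}|w(u_{i},u_{j})-\what(u_{i},u_{j})|\big)^2 \le \frac{1}{n^2}\sum_{i,j}(w(u_{i},u_{j})-\what(u_{i},u_{j}))^2 = \MSE(\what)$ by Cauchy--Schwarz, Jensen gives $\E[\MAE(\what)]\le\sqrt{\E[\MSE(\what)]}$, so it suffices to show $\E[\MSE(\what)]\to0$. The plan is a bias--variance split relative to the noiseless block averages. Put $\wtilde(u_{i},u_{j}) = \frac{1}{|\Bhat_i|\,|\Bhat_j|}\sum_{i_x\in\Bhat_i}\sum_{j_y\in\Bhat_j}w(u_{i_x},u_{j_y})$, so that $\MSE(\what)\le 2A+2V$ with $A=\frac{1}{n^2}\sum_{i,j}(w(u_{i},u_{j})-\wtilde(u_{i},u_{j}))^2$ (approximation error of the step function on the estimated blocks) and $V=\frac{1}{n^2}\sum_{i,j}(\wtilde(u_{i},u_{j})-\what(u_{i},u_{j}))^2$ (sampling error).

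Next I would fix the high-probability events. Take $\epsilon_n=\Delta^2$; by Theorem~\ref{thm:dhatij} and a union bound over the $\binom{n}{2}$ pairs, $\mathcal{E}_1=\{\max_{i\neq j}|\dhat_{ij}-d_{ij}|\le\epsilon_n\}$ has $\Pr[\mathcal{E}_1^{c}]\le 8n^2 e^{-\Theta(S\epsilon_n^2)}\to0$, since $S\in\Theta(n)$ and $\Delta\in\omega((\log n/n)^{1/4})$ force $n\Delta^4\in\omega(\log n)$; by Theorem~\ref{thm:number of blocks}, $\mathcal{E}_2=\{K\le QL\sqrt{2}/\Delta\}$ has $\Pr[\mathcal{E}_2^{c}]\to0$ for the same reason; and by Hoeffding applied to $\frac{1}{n}\sum_{b}f(u_{b})$ for the $O(n^2)$ functions $f=(w(u_a,\cdot)-w(u_{a'},\cdot))^2$ and their column analogues, the event $\mathcal{E}_3$ that all these empirical averages lie within $\delta_n:=\Delta^2$ of their integrals has $\Pr[\mathcal{E}_3^{c}]\to0$. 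On $\mathcal{E}_1$, every $i_x$ in a block $\Bhat_\ell$ satisfies $\dhat_{i_p,i_x}\le\Delta^2$ for the pivot $i_p$ of $\Bhat_\ell$, so $d_{i_p,i_x}\le\Delta^2+\epsilon_n$; because $\sqrt{d_{\cdot\cdot}}$ is a genuine pseudometric (up to the fixed factor $1/\sqrt{2}$ it is the $L^2\oplus L^2$ distance between the pairs of row/column slices), the triangle inequality yields $d_{i_x,i_y}\le 4(\Delta^2+\epsilon_n)$ for any $i_x,i_y$ in the same block, hence $\int_0^1(w(u_{i_x},y)-w(u_{i_y},y))^2\,dy\le 8(\Delta^2+\epsilon_n)$ and likewise for column slices.

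For the approximation error, Jensen gives $A\le\frac{1}{n^2}\sum_{i,j}\frac{1}{|\Bhat_i|\,|\Bhat_j|}\sum_{i_x,j_y}(w(u_{i},u_{j})-w(u_{i_x},u_{j_y}))^2$; inserting the intermediate value $w(u_{i_x},u_{j})$, using $(a+b)^2\le 2a^2+2b^2$, and collapsing one index via $\sum_a\frac{1}{|\Bhat_a|}\sum_{a'\in\Bhat_a}g(a')=\sum_{a'}g(a')$ reduces $A$ to one row-slice term and one column-slice term, both of the shape $\frac{2}{n}\sum_a\frac{1}{|\Bhat_a|}\sum_{a'\in\Bhat_a}\big(\frac{1}{n}\sum_b(w(u_a,u_b)-w(u_{a'},u_b))^2\big)$. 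On $\mathcal{E}_3$ each inner average is at most $\int_0^1(w(u_a,y)-w(u_{a'},y))^2dy+\delta_n$, which on $\mathcal{E}_1$ is $\le 8(\Delta^2+\epsilon_n)+\delta_n=o(1)$ since $a$ and $a'$ share a block; so $A=o(1)$ on $\mathcal{E}_1\cap\mathcal{E}_3$. For the sampling error, conditionally on the labels and the blocks $\what(u_{i},u_{j})-\wtilde(u_{i},u_{j})$ is an average of $2T|\Bhat_i||\Bhat_j|$ centered edge indicators, so its conditional second moment is $O(1/(T|\Bhat_i||\Bhat_j|))$; summing and using $\sum_i|\Bhat_i|^{-1}=K$ gives $\E[V\mid\text{labels}]=O(\E[K^2\mid\text{labels}]/(Tn^2))$, which on $\mathcal{E}_2$ is $O(1/(T\Delta^2n^2))\to0$ because $\Delta\in\omega((\log n/n)^{1/4})$ forces $\Delta^2n^2\gg n^{3/2}\sqrt{\log n}$. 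Since $\MSE(\what)\le1$ always, combining gives $\E[\MSE(\what)]\le 4A|_{\mathrm{good}}+4V|_{\mathrm{good}}+\Pr[(\mathcal{E}_1\cap\mathcal{E}_2\cap\mathcal{E}_3)^{c}]\to0$, and then $\E[\MAE(\what)]\le\sqrt{\E[\MSE(\what)]}\to0$.

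The step I expect to be hardest is the control of $V$: the estimated partition $\Bhat_1,\dots,\Bhat_K$ is a complicated function of the very graphs $G_1,\dots,G_{2T}$ whose entries are averaged in $\what$, so the edge indicators are not unconditionally independent of the blocks and the clean conditional-variance bound needs care --- either by conditioning so that the comparisons defining the blocks and the edges averaged in $\what$ draw on disjoint information, or by a union bound over the partitions the greedy algorithm can output, absorbed into the exponential tails of Theorems~\ref{thm:dhatij} and~\ref{thm:number of blocks}. The only other thing to verify is that the prescribed window $\Delta\in\omega((\log n/n)^{1/4})\cap o(1)$ is precisely what makes the approximation error $\Theta(\Delta^2)$ and the variance $\Theta(1/(\Delta^2 n^2))$ vanish simultaneously --- i.e.\ that this is exactly the regime in which the bias--variance trade-off closes.
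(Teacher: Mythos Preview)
Your proposal is sound and takes a genuinely different route from the paper on both halves of the bias--variance split.

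For the approximation term $A$, the paper does \emph{not} stay in $L^2$: it proves a separate technical lemma (Lemma~5 in the appendix) that converts smallness of the integral distance $d_{ij}$ into a pointwise bound $\sup_x|w(x,u_i)-w(x,u_j)|\le\epsilon$ via the piecewise-Lipschitz hypothesis, and from this obtains $|\wbar-w|\le C\Delta^{1/2}$ (Lemma~\ref{lemma:bound wbar and w}). Your $\mathcal{E}_3$ device---concentrating the empirical slice-$L^2$ distances to their integrals uniformly over the $O(n^2)$ pairs by Hoeffding---lets you remain in $L^2$ throughout and yields the sharper $A=O(\Delta^2)$ without the $L^2\!\to\!L^\infty$ conversion; in particular your bound on $A$ uses the Lipschitz assumption only indirectly, through Theorem~\ref{thm:number of blocks}.

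For the fluctuation term $V$, the paper works with tail probabilities (Hoeffding in Lemma~\ref{lemma:bound what and wbar}), producing an exponent proportional to $|\Bhat_i|\,|\Bhat_j|\Delta$; since this can be tiny when a block is small, the paper then performs an explicit small-block/large-block split in the proof of Theorem~\ref{thm:consistency}, showing that vertices lying in blocks of size below $n\Delta^2/(QL\sqrt 2)$ occupy at most a $\Delta$-fraction of $\{1,\dots,n\}$ and can be bounded crudely. Your second-moment route, with the identity $\sum_i|\Bhat_i|^{-1}=K$ giving $\E[V\mid\cdot]=O(K^2/(Tn^2))$, absorbs small blocks automatically and is more economical.

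The dependence issue you flag as the hard step is real, and the paper does not confront it either: the Hoeffding application in Lemma~\ref{lemma:bound what and wbar} treats the index set $\Bhat_i\times\Bhat_j$ as fixed when it is in fact a function of the same graphs $G_1,\dots,G_{2T}$. Your proposed sample-splitting fix (reserve some of the $2T$ observations for clustering and the rest for $\what$) resolves this cleanly in either argument with no effect on the rates; the union-bound-over-partitions alternative you mention would be far heavier and is not what the paper does.
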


\begin{proof}
The details of the proof can be found in the supplementary material . Here we only outline the key steps to present the intuition of the theorem. The goal of \thref{thm:consistency} is to show convergence of $|\what(u_i,u_j)-w(u_i,u_j)|$. The idea is to consider the following two quantities:
\begin{align*}
\wbar(u_i,u_j) &= \frac{1}{|\Bhat_i|\,|\Bhat_j|} \sum_{i_x \in \Bhat_i} \sum_{j_x \in \Bhat_j} w(u_{i_x}, u_{j_x}),\\
\what(u_i,u_j) &= \frac{1}{|\Bhat_i|\,|\Bhat_j|} \sum_{i_x \in \Bhat_i} \sum_{j_y \in \Bhat_j} \frac{1}{2T} \left( G_1[i_x,j_y] + G_2[i_x,j_y] + \ldots + G_{2T}[i_x,j_y]\right),
\end{align*}
so that if we can bound $|\wbar(u_i,u_j)-w(u_i,u_j)|$ and $|\wbar(u_i,u_j)-\what(u_i,u_j)|$, then consequently $|\what(u_i,u_j)-w(u_i,u_j)|$ can also be bounded.

The bound for the first term $|\wbar(u_i,u_j)-w(u_i,u_j)|$ is shown in Lemma \ref{lemma:bound wbar and w}: By Algorithm \ref{alg:SBA}, any vertex $i_v \in \Bhat_i$ is guaranteed to be within a distance $\Delta$ from the pivot of $\Bhat_i$. Since $\wbar(u_i,u_j)$ is an average over $\Bhat_i$ and $\Bhat_j$, by \thref{thm:dhatij} a probability bound involving $\Delta$ can be obtained.

The bound for the second term $|\wbar(u_i,u_j)-\what(u_i,u_j)|$ is shown in Lemma \ref{lemma:bound what and wbar}. Different from Lemma \ref{lemma:bound wbar and w}, here we need to consider two possible situations: either the intermediate estimate $\wbar(u_i,u_j)$ is close to the ground truth $w(u_i,u_j)$, or $\wbar(u_i,u_j)$ is far from the ground truth $w(u_i,u_j)$. This accounts for the sum in Lemma \ref{lemma:bound what and wbar}. Individual bounds are derived based on Lemma \ref{lemma:bound wbar and w} and \thref{thm:dhatij}.

Combining Lemma \ref{lemma:bound wbar and w} and Lemma \ref{lemma:bound what and wbar}, we can then bound the error and show convergence.
\end{proof}

\begin{lemma}
\label{lemma:bound wbar and w}
For any $i_v \in \Bhat_i$ and $j_v \in \Bhat_j$,
\begin{equation}
\Pr\left[ |\wbar(u_i,u_j) - w(u_{i_v},u_{j_v})| > 8\Delta^{1/2}L^{1/4} \right] \le 32|\Bhat_i|\,|\Bhat_j| e^{-\frac{S\Delta^4}{32/T+8\Delta^2/3}}.
\label{eq:bound wbar and w}
\end{equation}
\end{lemma}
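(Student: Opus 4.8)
The plan is to bound $|\wbar(u_i,u_j)-w(u_{i_v},u_{j_v})|$ by the worst pairwise graphon discrepancy over the two blocks, turn the membership test ``$\dhat\le\Delta^2$'' of Algorithm~\ref{alg:SBA} into a genuine $L^2$ bound on slice differences via \thref{thm:dhatij}, and finally upgrade that $L^2$ bound to a pointwise bound using the piecewise-Lipschitz property together with the trivial bound $w\le 1$. Since $\wbar(u_i,u_j)$ is the average of $w(u_{i_x},u_{j_x})$ over $i_x\in\Bhat_i$ and $j_x\in\Bhat_j$, the triangle inequality gives
\[
|\wbar(u_i,u_j)-w(u_{i_v},u_{j_v})|\ \le\ \max_{i_x\in\Bhat_i,\ j_x\in\Bhat_j}\ |w(u_{i_x},u_{j_x})-w(u_{i_v},u_{j_v})|,
\]
and I would split each term as $|w(u_{i_x},u_{j_x})-w(u_{i_v},u_{j_x})|+|w(u_{i_v},u_{j_x})-w(u_{i_v},u_{j_v})|$, isolating a move of the first coordinate inside $\Bhat_i$ from a move of the second coordinate inside $\Bhat_j$.

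For the first piece, every vertex accepted into $\Bhat_i$ --- in particular $i_x$ and $i_v$ --- has $\dhat$-distance at most $\Delta^2$ from the pivot $i_p$ of $\Bhat_i$. Applying \thref{thm:dhatij} with $\epsilon=\Delta^2$, outside an event of probability at most $8e^{-S\Delta^4/(32/T+8\Delta^2/3)}$ we have $d_{i_p,i_x}\le\dhat_{i_p,i_x}+\Delta^2\le 2\Delta^2$, and likewise $d_{i_p,i_v}\le 2\Delta^2$. Now $\sqrt{2d_{ab}}$ is the $L^2$ distance between the combined row/column slice profiles of $w$ at $u_a$ and $u_b$, hence obeys the triangle inequality and dominates $\norm{w(u_a,\cdot)-w(u_b,\cdot)}_2$; so on this event $\norm{w(u_{i_x},\cdot)-w(u_{i_v},\cdot)}_2\le\sqrt{2d_{i_p,i_x}}+\sqrt{2d_{i_p,i_v}}\le 4\Delta$, and the symmetric argument through the pivot of $\Bhat_j$ gives $\norm{w(\cdot,u_{j_x})-w(\cdot,u_{j_v})}_2\le 4\Delta$.

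The crux is passing from these integrated bounds to bounds at the specific arguments $y=u_{j_x}$ and $x=u_{i_v}$. With $\phi(y)=w(u_{i_x},y)-w(u_{i_v},y)$, which is $2L$-Lipschitz on each Lipschitz piece, if $|\phi(y_0)|=c$ then $|\phi|\ge c/2$ on an interval of length $c/(4L)$ about $y_0$, forcing $\norm{\phi}_2^2\ge c^3/(16L)$; combined with $c\le 1$ (since $w\in[0,1]$) this gives $c^4\le 16L\norm{\phi}_2^2\le 256L\Delta^2$, i.e. $|w(u_{i_x},u_{j_x})-w(u_{i_v},u_{j_x})|=c\le 4\Delta^{1/2}L^{1/4}$. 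The identical estimate on the columns bounds the second piece by $4\Delta^{1/2}L^{1/4}$ too, so every term in the maximum is at most $8\Delta^{1/2}L^{1/4}$. A union bound over all $i_x\in\Bhat_i$, $j_x\in\Bhat_j$ and the $O(1)$ distance-estimate events invoked per pair then produces the prefactor $32|\Bhat_i|\,|\Bhat_j|$ in \eref{eq:bound wbar and w}.

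The step I expect to be the main obstacle is this $L^2$-to-pointwise passage: one has to handle arguments lying within $c/(4L)$ of a boundary between Lipschitz pieces, where the ``spread out the mass'' lower bound on $\norm{\phi}_2^2$ is truncated, and check that all the exponents of $\Delta$ and $L$ and the constant $8$ come out exactly as claimed --- invoking $w\le 1$ to convert a cube-root estimate into a fourth-root estimate is the trick that yields the $\Delta^{1/2}L^{1/4}$ rate. The remaining ingredients are routine: the reduction to pairwise discrepancies, the application of \thref{thm:dhatij}, and the union bound.
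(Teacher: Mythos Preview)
Your proposal is correct and follows essentially the same approach as the paper: convert the algorithmic condition $\dhat\le\Delta^2$ into $d\le 2\Delta^2$ via \thref{thm:dhatij}, pass from the $L^2$ slice bound to a pointwise bound using piecewise Lipschitz continuity, and union-bound over the block. The paper packages the $L^2$-to-pointwise step as a standalone lemma and resolves the boundary issue you flag by imposing the side condition $\Delta^2<\delta^2 L/4$ (so the spread-the-mass interval of length $\sim c/L$ fits inside a single Lipschitz piece); apart from this and the cosmetic choice of routing through the pivot at the pointwise level rather than at the $L^2$ level, the two arguments coincide.
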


\begin{lemma}
\label{lemma:bound what and wbar}
For any $i_v \in \Bhat_i$ and $j_v \in \Bhat_j$,
\begin{equation}
\Pr\left[ |\what_{ij} - \wbar_{ij}| > 8\Delta^{1/2}L^{1/4} \right]  \le 2 e^{-256 (T |\Bhat_i|\,|\Bhat_j| \sqrt{L} \Delta)} + 32 |\Bhat_i|^2 |\Bhat_j|^2 e^{-\frac{S\Delta^4}{32/T+8\Delta^2/3)}}.
\label{eq:bound what and wbar}
\end{equation}
\end{lemma}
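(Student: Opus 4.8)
The plan is to combine a Hoeffding-type concentration of the empirical block density $\what_{ij}$ around the block-averaged graphon value $\wbar_{ij}$ with a case split that neutralises the dependence of the estimated blocks on the observed graphs. Write $\delta:=8\Delta^{1/2}L^{1/4}$ for the threshold in the statement, and record the identity
\begin{equation*}
\what_{ij}-\wbar_{ij}=\frac{1}{2T\,|\Bhat_i|\,|\Bhat_j|}\sum_{i_x\in\Bhat_i}\sum_{j_y\in\Bhat_j}\sum_{t=1}^{2T}\Bigl(G_t[i_x,j_y]-w(u_{i_x},u_{j_y})\Bigr),
\end{equation*}
so that $\what_{ij}$ is the empirical mean of $N:=2T\,|\Bhat_i|\,|\Bhat_j|$ indicators in $[0,1]$, the indicator $G_t[i_x,j_y]$ having (conditional on the labels) mean $w(u_{i_x},u_{j_y})$, and $\wbar_{ij}$ being the average of these means. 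If the $N$ indicators were conditionally independent, Hoeffding's inequality would give $\Pr[\,|\what_{ij}-\wbar_{ij}|>\delta\,]\le 2e^{-2N\delta^2}$; since $\delta^2=64\sqrt L\,\Delta$ and hence $2N\delta^2=256\,T\,|\Bhat_i|\,|\Bhat_j|\sqrt L\,\Delta$, this is precisely the first term on the right-hand side of \eref{eq:bound what and wbar}.

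The obstacle to applying Hoeffding as stated is that $\Bhat_i,\Bhat_j$ are produced by Algorithm \ref{alg:SBA} from $G_1,\dots,G_{2T}$, so $\wbar_{ij}$ is not obviously the conditional mean of $\what_{ij}$ given the blocks. To handle this I would split on the good event
\begin{equation*}
\mathcal{E}:=\bigcap_{i_x\in\Bhat_i}\bigcap_{j_y\in\Bhat_j}\Bigl\{\,\bigl|\wbar_{ij}-w(u_{i_x},u_{j_y})\bigr|\le\delta\,\Bigr\},
\end{equation*}
on which $\wbar_{ij}$ is a faithful proxy for the graphon over the whole pair of blocks (the ``intermediate estimate close to the truth'' case), and bound $\Pr[\,|\what_{ij}-\wbar_{ij}|>\delta\,]\le\Pr[\,|\what_{ij}-\wbar_{ij}|>\delta,\mathcal{E}\,]+\Pr[\mathcal{E}^c]$. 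The term $\Pr[\mathcal{E}^c]$ follows immediately from Lemma \ref{lemma:bound wbar and w}: each of the $|\Bhat_i|\,|\Bhat_j|$ events in the intersection fails with probability at most $32\,|\Bhat_i|\,|\Bhat_j|\,e^{-S\Delta^4/(32/T+8\Delta^2/3)}$, so a union bound yields $\Pr[\mathcal{E}^c]\le 32\,|\Bhat_i|^2|\Bhat_j|^2\,e^{-S\Delta^4/(32/T+8\Delta^2/3)}$, the second term of \eref{eq:bound what and wbar}. On the complementary event the Hoeffding estimate sketched above supplies the first term, and adding the two pieces closes the argument.

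The step I expect to be the crux is making the Hoeffding estimate on $\mathcal{E}$ rigorous — i.e., verifying that, after the appropriate conditioning, the edge variables internal to $\Bhat_i\times\Bhat_j$ retain a product law with mean vector $(w(u_{i_x},u_{j_y}))$, which is what makes $\wbar_{ij}$ their mean. The route I would follow exploits the structure of Algorithm \ref{alg:SBA}: a vertex enters a block only through the distance estimates $\dhat$, and $\dhat_{ab}$ depends on the graphs only through edges incident to the neighborhood set $\calS$, whereas $\what_{ij}-\wbar_{ij}$ depends only on edges with both endpoints in $\Bhat_i\cup\Bhat_j$. Taking $\calS$ disjoint from these blocks — admissible because $\calS$ may be any random subset of the vertices, and harmless in the $S\in\Theta(n)$ regime — makes the two families of edges disjoint; conditioning on the labels, on $\calS$, on the edges incident to $\calS$, and on the pivot choices then fixes $\Bhat_i$ and $\Bhat_j$ while leaving $\{G_t[i_x,j_y]\}_{i_x\in\Bhat_i,j_y\in\Bhat_j,\,t\le 2T}$ independent $\mathrm{Bernoulli}(w(u_{i_x},u_{j_y}))$, so Hoeffding with $N=2T\,|\Bhat_i|\,|\Bhat_j|$ summands and deviation $\delta$ applies conditionally and, after integrating out the conditioning, unconditionally. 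The remaining manipulations — tracking the constants $2$ and $32$, and the passage between conditional and unconditional probabilities — are routine.
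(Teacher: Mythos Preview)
Your proposal follows the paper's proof almost exactly: define the good event $\calE=\{|\wbar_{ij}-w(u_{i_x},u_{j_y})|\le 8\Delta^{1/2}L^{1/4}\text{ for all }i_x\in\Bhat_i,\,j_y\in\Bhat_j\}$, bound $\Pr[\calE^c]$ by a union bound over Lemma~\ref{lemma:bound wbar and w} to obtain the second term, and control the remaining piece by Hoeffding over the $2T\,|\Bhat_i|\,|\Bhat_j|$ indicators to obtain the first.

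Two remarks on the differences. First, the paper actually uses $\calE$ \emph{inside} the Hoeffding step: it replaces $\wbar_{ij}$ by a fixed $w(u_{i_v},u_{j_v})$ and applies Hoeffding at threshold $\epsilon=4\Delta^{1/2}L^{1/4}$, half of your $\delta$. Your direct application --- treating $\wbar_{ij}$ as the mean of the heterogeneous Bernoullis and invoking Hoeffding at threshold $\delta$ --- is cleaner and recovers the exponent $256\,T|\Bhat_i||\Bhat_j|\sqrt{L}\Delta$ without the detour. Second, the dependence issue you raise (that $\Bhat_i,\Bhat_j$ are functions of the same data) is not addressed in the paper at all; it simply applies Hoeffding conditionally on $\calE$ without further justification. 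Your proposed fix of taking $\calS$ disjoint from $\Bhat_i\cup\Bhat_j$ is not quite workable as written, since $\calS$ must be fixed before Algorithm~\ref{alg:SBA} runs while the blocks are its output and partition all $n$ vertices --- but this is a gap shared with the paper's own argument, not a divergence from it.
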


The condition $S \in \Theta(n)$ is necessary to make \thref{thm:consistency} valid, because if $S$ is independent of $n$, it is not possible to drive \eref{eq:bound wbar and w} and \eref{eq:bound what and wbar} to $0$ even if $n \rightarrow \infty$. The other condition on $\Delta$ is also important as it forces the numerators and denominators in the exponentials of \eref{eq:bound wbar and w} and \eref{eq:bound what and wbar} to be well behaved.

\section{Experiments}
\label{sec:experiments}
In this section we evaluate the proposed SBA algorithm by showing some empirical results. For the purpose of comparison, we consider (i) the universal singular value thresholding (USVT) \citep{Chatterjee2012}; (ii) the largest-gap algorithm (LG) \citep{Channarond_Daudin_Robin_2012}; (iii) matrix completion from few entries (OptSpace) \citep{Keshavan_Montanari_Oh_2010}.

\subsection{Estimating stochastic blockmodels}
\paragraph{Accuracy as a function of growing graph size.}
Our first experiment is to evaluate the proposed SBA algorithm for estimating stochastic blockmodels. For this purpose, we generate (arbitrarily) a graphon
\begin{equation}
w = \begin{bmatrix}
     0.8 &0.9  &0.4 & 0.5\\
     0.1 &0.6  &0.3 & 0.2\\
     0.3 &0.2  &0.8 & 0.3\\
     0.4 &0.1  &0.2 & 0.9
     \end{bmatrix},
\label{eq:graphon example}
\end{equation}
which represents a piecewise constant function with $4 \times 4$ equi-space blocks.

\begin{figure}[h]
\centering
\begin{tabular}{cc}
\includegraphics[width=0.5\linewidth]{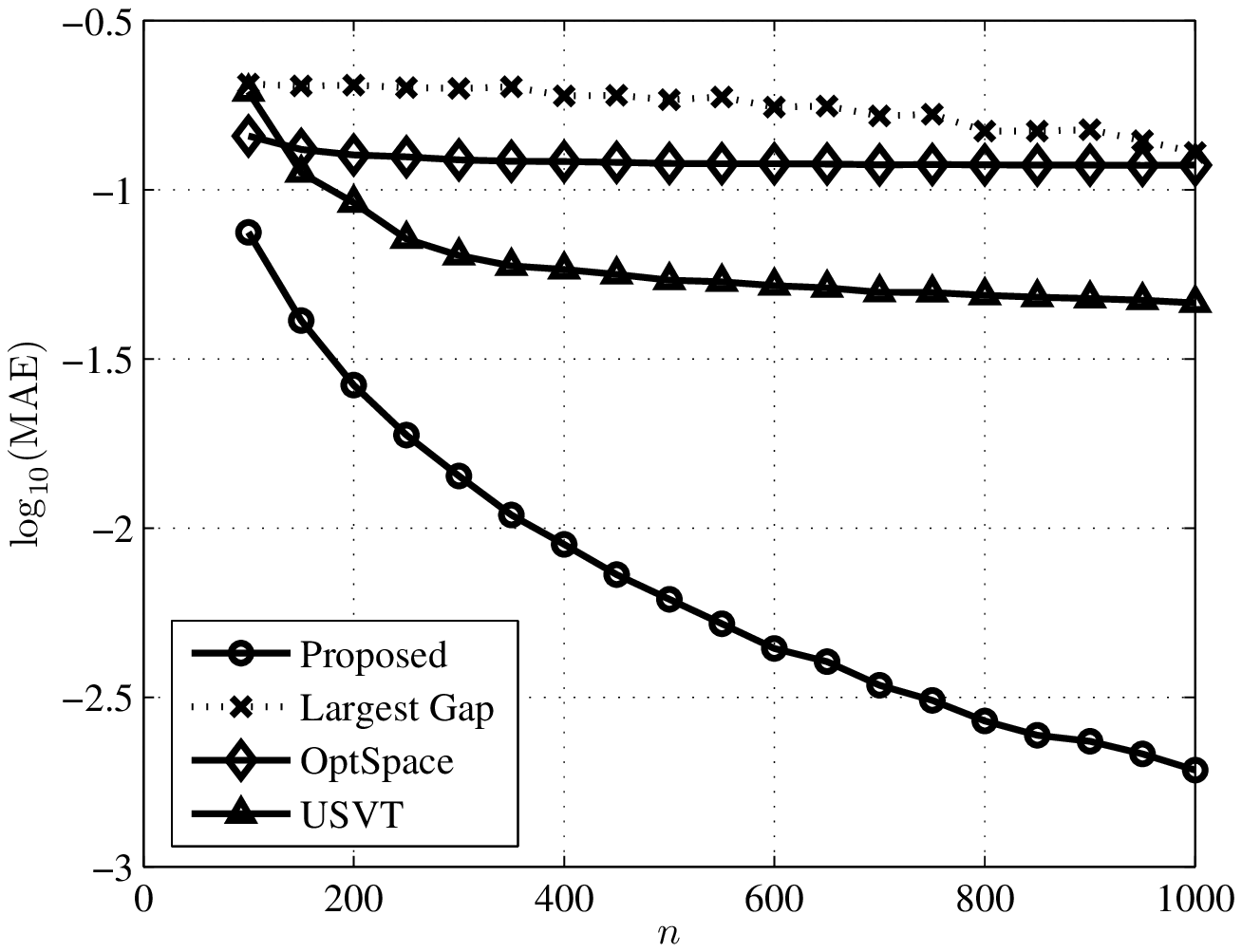}&
\includegraphics[width=0.5\linewidth]{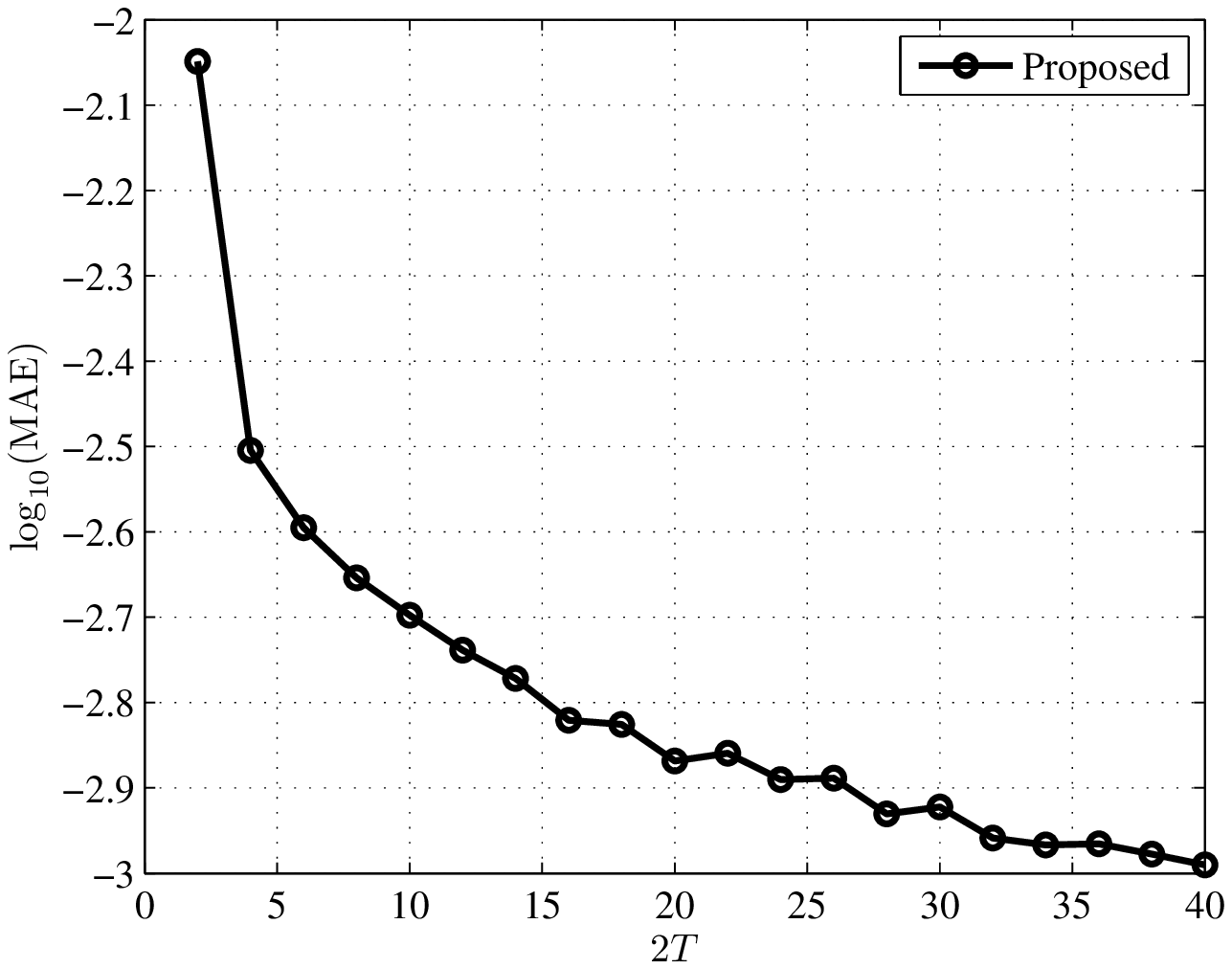}\\
(a) Growing graph size, $n$  & (b) Growing no. observations, $2T$
\end{tabular}
\caption{(a) MAE reduces as graph size grows. For the fairness of the amount of data that can be used, we use $\frac{n}{2} \times \frac{n}{2} \times 2$ observations for SBA, and $n \times n \times 1$ observation for USVT \citep{Chatterjee2012} and LG \citep{Channarond_Daudin_Robin_2012}. (b) MAE of the proposed SBA algorithm reduces when more observations $T$ is available. Both plots are averaged over 100 independent trials.}
\label{fig:Plot1 and Plot2}
\end{figure}

Since USVT and LG use only one observed graph whereas the proposed SBA require at least $2$ observations, in order to make the comparison fair, we use half of the nodes for SBA by generating two independent $\frac{n}{2} \times \frac{n}{2}$ observed graphs. For USVT and LG, we use one $n \times n$ observed graph.

\fref{fig:Plot1 and Plot2}(a) shows the asymptotic behavior of the algorithms when $n$ grows. \fref{fig:Plot1 and Plot2}(b) shows the estimation error of SBA algorithm as $T$ grows for graphs of size 200 vertices.


\paragraph{Accuracy as a function of growing number of blocks.}
Our second experiment is to evaluate the performance of the algorithms as $K$, the number of blocks, increases. To this end, we consider a sequence of $K$, and for each $K$ we generate a graphon $w$ of $K \times K$ blocks. Each entry of the block is a random number generated from $\mbox{Uniform}[0,1]$. Same as the previous experiment, we fix $n = 200$ and $T = 1$. The experiment is repeated over 100 trials so that in every trial a different graphon is generated. The result shown in \fref{fig:Plot3 and Plot4}(a) indicates that while estimation error increases as $K$ grows, the proposed SBA algorithm still attains the lowest MAE for all $K$.

\begin{figure}[ht]
\centering
\begin{tabular}{cc}
\includegraphics[width=0.5\linewidth]{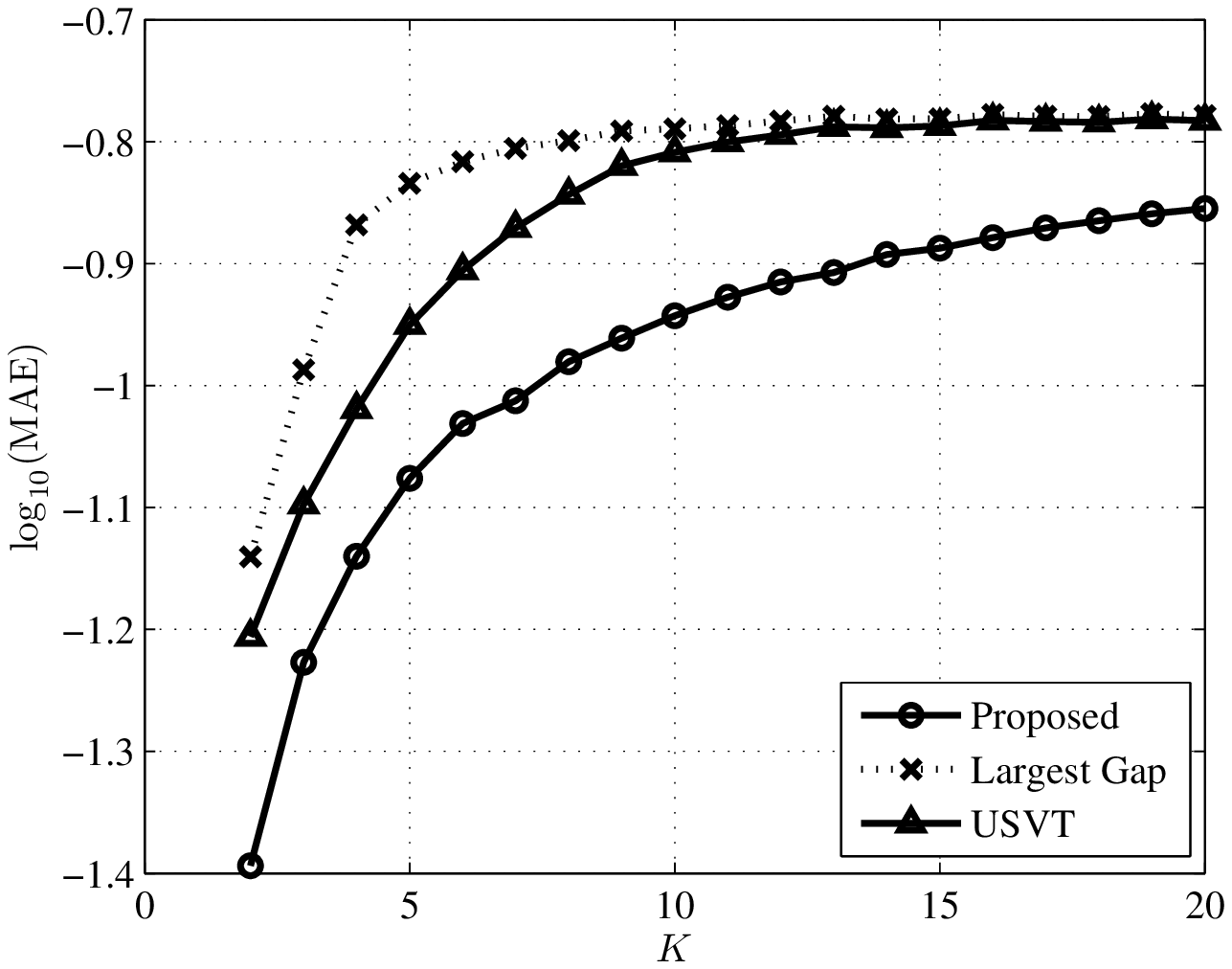}&
\includegraphics[width=0.5\linewidth]{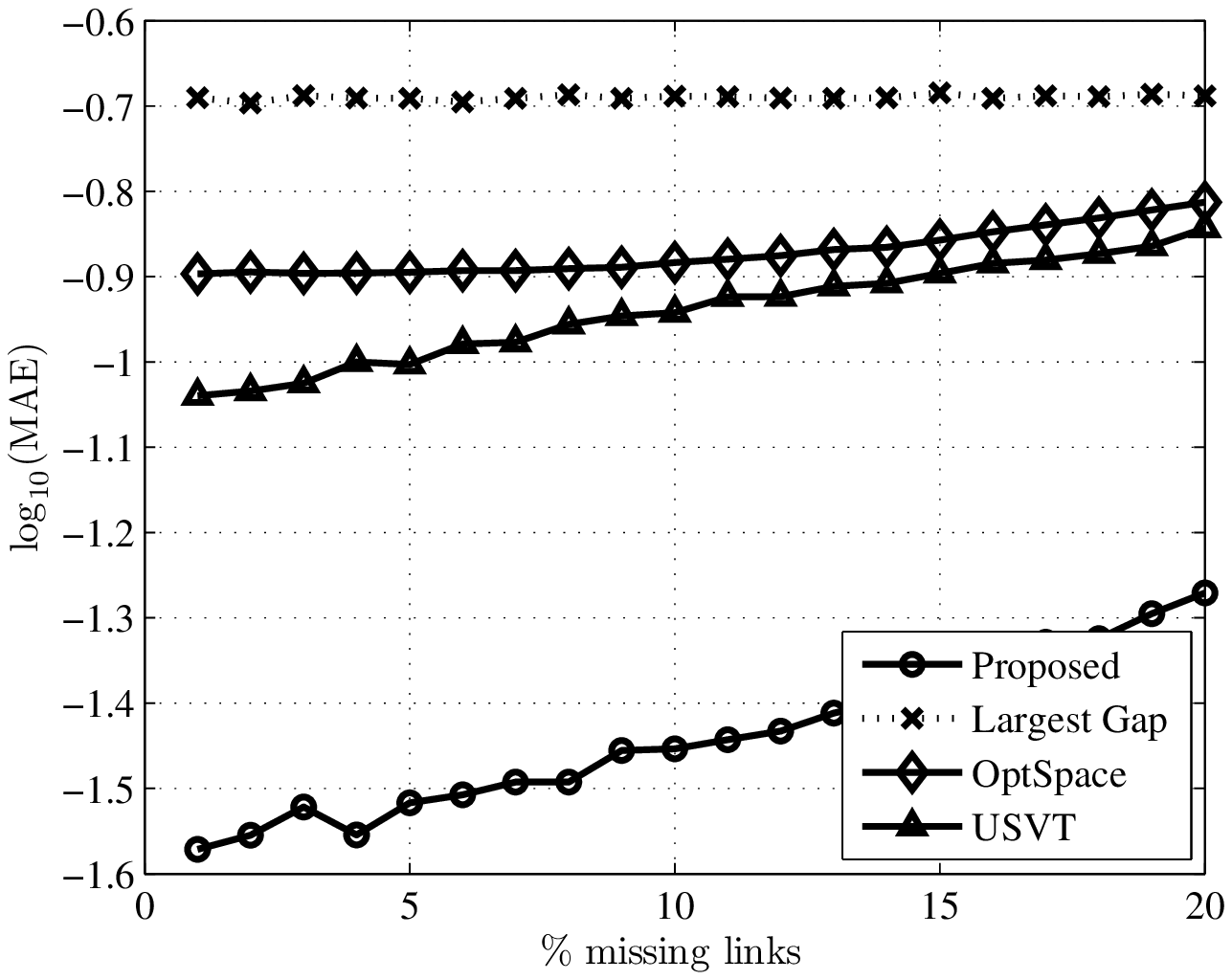}\\
(a) Growing no. blocks, $K$ & (b) Missing links
\end{tabular}
\caption{(a) As $K$ increases, MAE of all three algorithm increases but SBA still attains the lowest MAE. Here, we use $\frac{n}{2} \times \frac{n}{2} \times 2$ observations for SBA, and $n \times n \times 1$ observation for USVT \citep{Chatterjee2012} and LG \citep{Channarond_Daudin_Robin_2012}. (b) Estimation of graphon in the presence of missing links: As the amount of missing links increases, estimation error also increases.}
\label{fig:Plot3 and Plot4}
\end{figure}

\subsection{Estimation with missing edges}
Our next experiment is to evaluate the performance of proposed SBA algorithm when there are missing edges in the observed graph. To model missing edges, we construct an $n \times n$ binary matrix $M$ with probability $\Pr[M[i,j] = 0] = \xi$, where $0 \le \xi \le 1$ defines the percentage of missing edges. Given $\xi$,  $2T$  matrices are generated with missing edges, and the observed graphs are defined as $M_1 \odot G_1,\ldots,M_{2T} \odot G_{2T}$, where $\odot$ denotes the element-wise multiplication. The goal is to study how well SBA can reconstruct the graphon $\what$ in the presence of missing links.

The modification of the proposed SBA algorithm for the case missing links is minimal: when computing \eref{eq:what}, instead of averaging over all $i_x \in \Bhat_i$ and $j_y \in \Bhat_j$, we only average $i_x \in \Bhat_i$ and $j_y \in \Bhat_j$ that are not masked out by all $M'$s. \fref{fig:Plot3 and Plot4}(b) shows the result of average over 100 independent trials. Here, we consider the graphon given in \eref{eq:graphon example}, with $n=200$ and $T = 1$. It is evident that SBA outperforms its counterparts at a lower rate of missing links.

\subsection{Estimating continuous graphons}
Our final experiment is to evaluate the proposed SBA algorithm in estimating continuous graphons. Here, we consider two of the graphons reported in \citep{Chatterjee2012}:
\begin{align*}
w_1(u,v) = \frac{1}{1+\exp\{-50(u^2+v^2)\}}, \quad\mbox{and}\quad w_2(u,v) = uv,
\end{align*}
where $u, v \in [0,1]$. Here, $w_2$ can be considered as a special case of the Eigenmodel \citep{Hoff_2008} or latent feature relational model \citep{Miller_Griffiths_Jordan_2009}.

The results in \fref{fig:Plot5 and Plot6} shows that while both algorithms have improved estimates when $n$ grows, the performance depends on which of $w_1$ and $w_2$ that we are studying. This suggests that in practice the choice of the algorithm should depend on the expected structure of the graphon to be estimated: If the graph generated by the graphon demonstrates some low-rank properties, then  USVT is likely to be a better option. For more structured or complex graphons the proposed  procedure is recommended.

\begin{figure}[ht]
\centering
\begin{tabular}{cc}
\includegraphics[width=0.5\linewidth]{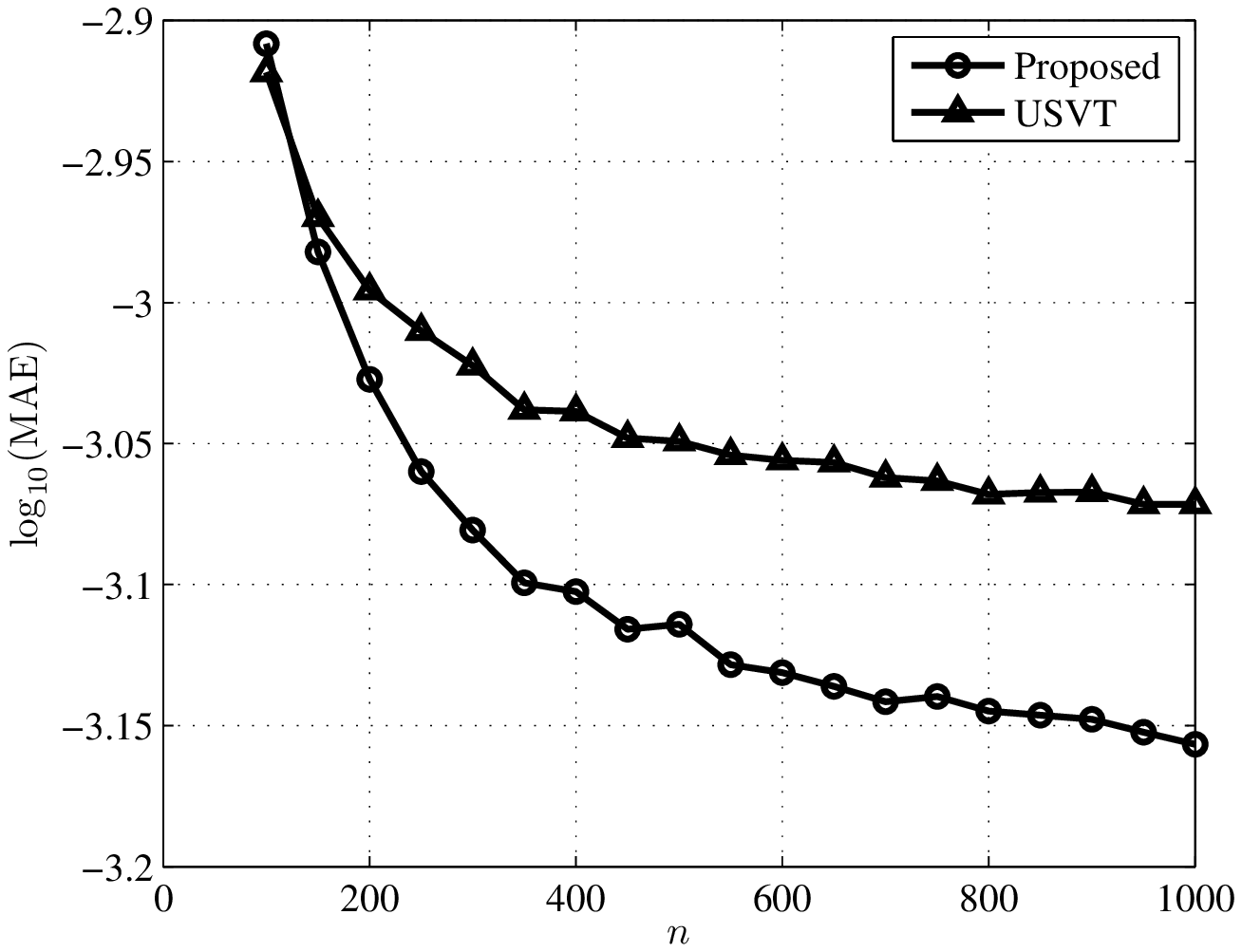}&
\includegraphics[width=0.5\linewidth]{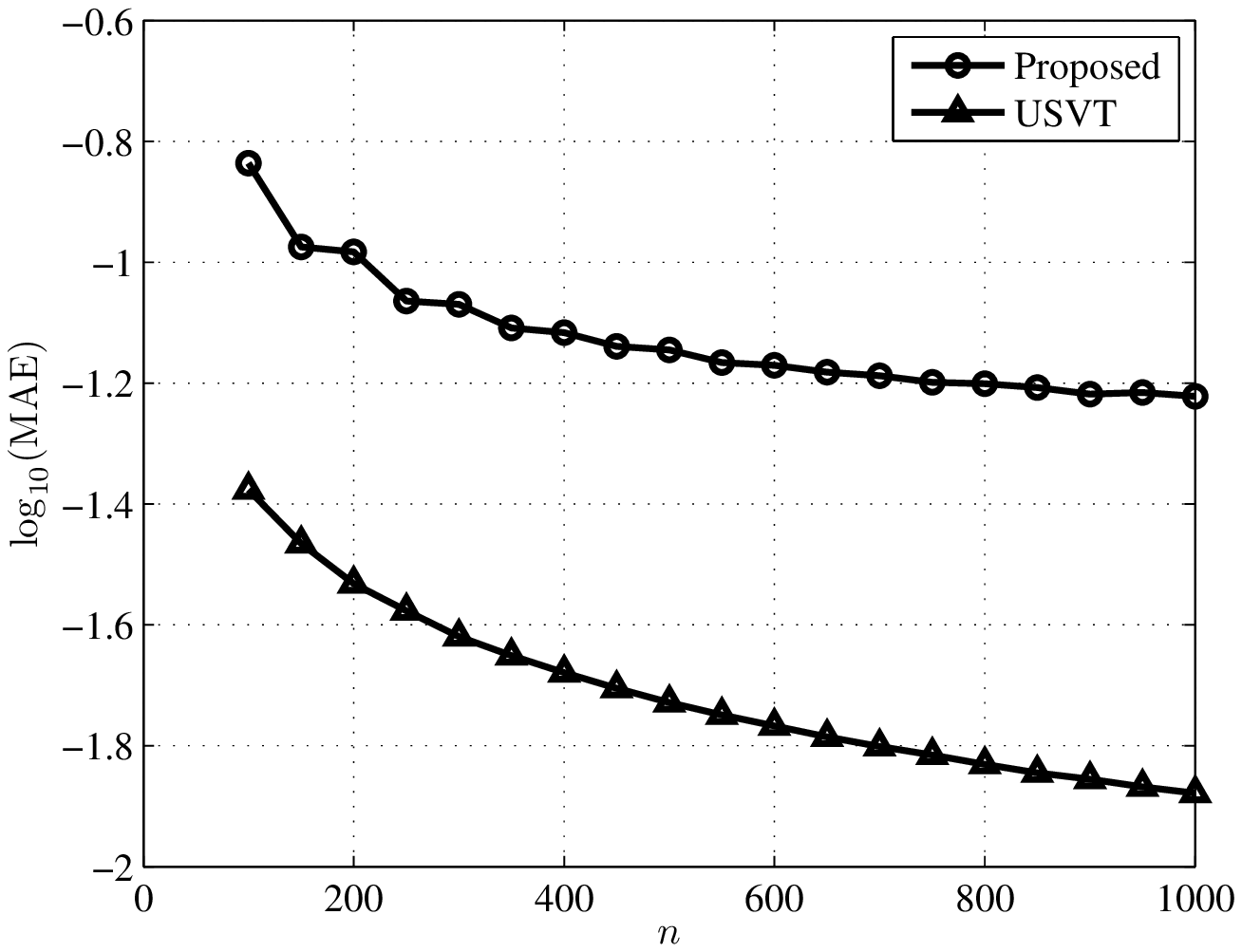}\\
(a) graphon $w_1$ & (b) graphon $w_2$
\end{tabular}
\caption{Comparison between SBA and USVT in estimating two continuous graphons $w_1$ and $w_2$. Evidently, SBA performs better for $w_1$ (high-rank) and worse for $w_2$ (low-rank).}
\label{fig:Plot5 and Plot6}
\end{figure}

\section{Concluding remarks}
\label{sec:conclusion}
We presented a new computational tool for estimating graphons. The proposed algorithm approximates the continuous graphon by a stochastic block-model, in which the first step is to cluster the unknown vertex labels into blocks by using an empirical estimate of the distance between two graphon slices, and the second step is to build an empirical histogram to estimate the graphon. Complete consistency analysis of the algorithm is derived. The algorithm was evaluated experimentally, and we found that the algorithm is effective in estimating block structured graphons.

\textbf{Code}. An implementation of the stochastic blockmodel approximation (SBA) algorithm proposed in this paper is available online at: \url{https://github.com/airoldilab/SBA}

\textbf{Acknowledgments}. EMA is partially supported by NSF CAREER award IIS-1149662, ARO MURI award W911NF-11-1-0036, and an Alfred P. Sloan Research Fellowship. SHC is partially supported by a Croucher Foundation Post-Doctoral Research Fellowship.

\footnotesize

\appendix
\section{Proofs for Section 3.1}
\setcounter{theorem}{0}
\begin{theorem}
\label{thm:unbiasedness of dhatij}
The estimator $\dhat_{ij}$ for $d_{ij}$ is unbiased. Further, for any $\epsilon > 0$, if the graph is directed, then
\begin{equation}
\Pr\left[ \left|\dhat_{ij}  - d_{ij}\right| > \epsilon \right] \le 8 e^{-\frac{S\epsilon^2}{32/T + 8\epsilon/3}},
\end{equation}
and if the graph is un-directed, then
\begin{equation}
\Pr\left[ \left|\dhat_{ij}  - d_{ij}\right| > \epsilon \right] \le 8 e^{-\frac{S\epsilon^2}{64/T + 8\epsilon/3}},
\end{equation}
where $S$ is the size of the sampling neighborhood $\calS$, and $2T$ is the number of observations.
\end{theorem}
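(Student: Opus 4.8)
The plan is to establish unbiasedness by conditioning on the vertex labels and using the conditional independence of edges, and then to prove the concentration bound by writing $\dhat_{ij}$ as an average of i.i.d. bounded random variables indexed by the neighbor $k\in\calS$ and applying Bernstein's inequality; the directed/undirected distinction will enter only through the variance estimate.

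\textbf{Unbiasedness.} Condition on all labels $u_1,\dots,u_n$. Given the labels, the edge indicators of each $G_t$ are independent Bernoulli variables with the appropriate $w(u_\cdot,u_\cdot)$ parameters and the $2T$ graphs are mutually independent, so for $t_1\le T<t_2$ and $k\notin\{i,j\}$ we get $\E[G_{t_1}[i,k]G_{t_2}[j,k]\mid\text{labels}]=w(u_i,u_k)w(u_j,u_k)$; averaging over $t_1,t_2$ gives $\E[\rhat_{ij}^k\mid\text{labels}]=w(u_i,u_k)w(u_j,u_k)$ and likewise $\E[\chat_{ij}^k\mid\text{labels}]=w(u_k,u_i)w(u_k,u_j)$. (Splitting the $2T$ observations into two halves is precisely what makes this work for the diagonal terms $\rhat_{ii}^k$, since $\E[G_t[i,k]^2]=w(u_i,u_k)\neq w(u_i,u_k)^2$.) Because $k\notin\{i,j\}$, the label $u_k$ is uniform on $[0,1]$ and independent of $(u_i,u_j)$, so a further expectation over $u_k$ yields $\E[\rhat_{ij}^k\mid u_i,u_j]=r_{ij}$ and $\E[\chat_{ij}^k\mid u_i,u_j]=c_{ij}$. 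Summing the eight terms and invoking $d_{ij}=\tfrac12[(r_{ii}-r_{ij}-r_{ji}+r_{jj})+(c_{ii}-c_{ij}-c_{ji}+c_{jj})]$ gives $\E[\dhat_{ij}\mid u_i,u_j]=d_{ij}$, and hence the unconditional statement.

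\textbf{Concentration.} It suffices to prove the bound conditionally on $(u_i,u_j)$. Write $\dhat_{ij}=\frac1S\sum_{k\in\calS}\zeta_k$ with $\zeta_k=\tfrac12[(\rhat_{ii}^k-\rhat_{ij}^k-\rhat_{ji}^k+\rhat_{jj}^k)+(\chat_{ii}^k-\chat_{ij}^k-\chat_{ji}^k+\chat_{jj}^k)]$. For distinct $k$, $\zeta_k$ depends only on the fresh label $u_k$ and on the edges of $G_1,\dots,G_{2T}$ joining $k$ to $i$ or $j$; since the graph is directed these edge sets are pairwise disjoint across $k$, so conditionally on $(u_i,u_j)$ the $\zeta_k$ are i.i.d. with mean $d_{ij}$. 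Using the identity $\rhat_{ii}^k-\rhat_{ij}^k-\rhat_{ji}^k+\rhat_{jj}^k=(a_i-a_j)(b_i-b_j)$, where $a_\cdot,b_\cdot\in[0,1]$ are the empirical edge frequencies over the two halves (and similarly for the $\chat$ part), shows $\zeta_k\in[-1,1]$, hence $|\zeta_k-d_{ij}|\le 2$. The remaining ingredient, and the part requiring care, is a bound on $\Var(\zeta_k\mid u_i,u_j)$: conditioning further on $u_k$, each empirical frequency is an average of $T$ i.i.d. Bernoullis, so the within-$u_k$ fluctuation of each factored product contributes order $1/T$, and in the directed case the "row" product $(a_i-a_j)(b_i-b_j)$ and the "column" product are built from disjoint edges, so they add with the extra factor $\tfrac14$ from the $\tfrac12$ in front of $\zeta_k$; assembling this yields a variance bound whose leading term is $\propto 1/T$. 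Feeding the range bound $2$ and this variance bound into Bernstein's inequality for $\sum_{k\in\calS}(\zeta_k-d_{ij})$ gives a tail of the form $2\exp\!\big(-S\epsilon^2/(32/T+8\epsilon/3)\big)$, and the prefactor $8$ comes from first splitting $\{|\dhat_{ij}-d_{ij}|>\epsilon\}$ into the row/column contributions (and the two signs of the deviation) before applying the two-sided inequality.

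\textbf{Undirected case and main obstacle.} If the graph is undirected then $G_t[i,k]=G_t[k,i]$, so $\chat_{ij}^k=\rhat_{ij}^k$ and $\zeta_k$ collapses to the single product $(a_i-a_j)(b_i-b_j)$ rather than the average of two products built from disjoint edges; one therefore loses the factor $\tfrac12$ in the variance estimate, which replaces $32/T$ by $64/T$ while leaving the $8\epsilon/3$ term unchanged. The main obstacle throughout is the variance estimate for $\zeta_k$: it is a sum of products of \emph{correlated} empirical averages, so one must decompose it (via the law of total variance, conditioning on $u_k$) and track constants carefully to land exactly on $32/T+8\epsilon/3$ (resp.\ $64/T$); the other point that needs a clean argument is the verification that the $\zeta_k$ are genuinely independent across $k\in\calS$ — disjointness of the relevant edge sets together with the freshness and independence of each $u_k$ — since this is what licenses the i.i.d. Bernstein inequality.
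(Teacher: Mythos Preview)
Your unbiasedness argument is essentially the paper's: condition on the labels, use conditional independence across the two halves of the observations to get $\E[\rhat_{ij}^k\mid u_k]=w(u_i,u_k)w(u_j,u_k)$, then integrate out $u_k$. Fine.

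For the concentration bound, your overall plan (i.i.d.\ summands indexed by $k\in\calS$, Bernstein) is the right one, and your factoring observation $\rhat_{ii}^k-\rhat_{ij}^k-\rhat_{ji}^k+\rhat_{jj}^k=(a_i-a_j)(b_i-b_j)$ is a nice touch that the paper does not use. However, your account of how the exact constants $8$ and $32/T+8\epsilon/3$ arise is muddled, and it does not match the decomposition that actually produces them. You apply Bernstein to the full $\zeta_k$ and then say the prefactor $8$ comes from an additional ``row/column $+$ signs'' split; but a single two-sided Bernstein already gives prefactor $2$, and a row/column split would give $4$, not $8$. The paper's route is different: it does \emph{not} bundle everything into one $\zeta_k$. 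Instead it bounds $\Var[\rhat_{ab}^k]\le 2/T$ for each fixed index pair $(a,b)\in\{(i,i),(i,j),(j,i),(j,j)\}$ via an explicit covariance computation over $t_1,t_2,\tau_1,\tau_2$; in the directed case $\rhat_{ab}^k$ and $\chat_{ab}^k$ are independent (disjoint edge sets), so $\Var\!\big[\tfrac12(\rhat_{ab}^k+\chat_{ab}^k)\big]\le 1/T$. Bernstein is then applied separately to each of the four averages $\tfrac12(\rhat_{ab}+\chat_{ab})$ at threshold $\epsilon/4$ (range bound $1$), which gives the exponent $\frac{S(\epsilon/4)^2}{2(1/T+(\epsilon/4)/3)}=\frac{S\epsilon^2}{32/T+8\epsilon/3}$, and the union bound over the four index pairs yields the prefactor $4\times 2=8$. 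The undirected case is exactly as you say: $\chat_{ab}^k=\rhat_{ab}^k$, the halving of the variance is lost, and $32/T$ becomes $64/T$.

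So the substantive gap in your write-up is the variance step and the decomposition: you need either (i) to follow the paper and split by the four index pairs $(i,i),(i,j),(j,i),(j,j)$ with threshold $\epsilon/4$, or (ii) to carry your single-$\zeta_k$ approach through honestly, in which case you must actually compute $\Var(\zeta_k)$ (law of total variance over $u_k$, then the product structure) and accept whatever constants fall out---they will not be $8$ and $32/T+8\epsilon/3$ on the nose. Your current text mixes the two and does not land on a valid derivation of the stated constants.
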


\begin{proof}
First, for given $u_i$ and $u_j$, let us define the following two quantities
\begin{align*}
c_{ij} &\defequal \int_0^1 w(x,u_i)w(x,u_j) dx,\\
r_{ij} &\defequal \int_0^1 w(u_i,y)w(u_j,y) dy.
\end{align*}
Consequently, we express $d_{ij}$ as
\begin{align*}
d_{ij}
&\defequal \frac{1}{2} \left( \int_0^1 (w(u_i,y)-w(u_j,y))^2 dy + \int_0^1 (w(x,u_i)-w(x,u_j))^2 dx\right)\\
&= \frac{1}{2} \left[ \left(r_{ii} - r_{ij} - r_{ji} + r_{jj}\right) + \left(c_{ii} - c_{ij} - c_{ji} + c_{jj}\right) \right].
\end{align*}

In order to study $\dhat_{ij}$ (the estimator of $d_{ij}$), it is desired to express $\dhat_{ij}$ in the same form of $d_{ij}$:
\begin{align}
\dhat_{ij}
&= \frac{1}{S} \sum_{k \in \calS}\left\{
\frac{1}{2}\left[
\left(\rhat_{ii}^k - \rhat_{ij}^k - \rhat_{ji}^k + \rhat_{jj}^k \right) +
\left(\chat_{ii}^k - \chat_{ij}^k - \chat_{ji}^k + \chat_{jj}^k \right)\right]
\right\},
\label{eq:thm1,proof,dhat}
\end{align}
where $\calS = \{1,\ldots,n\} \backslash \{i,j\}$ is the sampling neighborhood, and $S = |\calS|$. In \eref{eq:thm1,proof,dhat}, individual components are defined as
\begin{align*}
\chat_{ij}^k &= \frac{1}{T^2}\left( \sum_{1\le t_1 \le T}G_{t_1}[k,i]\right)\left( \sum_{T < t_2 \le 2T}G_{t_2}[k,j]\right),\\
\rhat_{ij}^k &= \frac{1}{T^2}\left( \sum_{1\le t_1 \le T}G_{t_1}[i,k]\right)\left( \sum_{T < t_2 \le 2T}G_{t_2}[j,k]\right).
\end{align*}
Thus, if we can show that $\rhat^k_{ij}$ and $\chat^k_{ij}$ are unbiased estimators of $r_{ij}$ and $c_{ij}$, \emph{i.e.}, $\E[\rhat^k_{ij}] = r_{ij}$ and $\E[\chat^k_{ij}] = c_{ij}$, then by linearity of expectation, $\dhat_{ij}$ will be an unbiased estimator of $d_{ij}$.

To this end, we consider the conditional expectation of $G_{t_1}[i,k]G_{t_2}[j,k]$ given $u_k$:
\begin{align}
\E[ G_{t_1}[i,k]G_{t_2}[j,k] \;|\; u_k]
&= 1 \cdot \Pr\Big[G_{t_1}[i,k]G_{t_2}[j,k] = 1  \;\Big|\; u_k \Big] + 0 \cdot \Pr\Big[G_{t_1}[i,k]G_{t_2}[j,k] = 1  \;\Big|\; u_k \Big] \notag\\
&= \Pr\Big[G_{t_1}[i,k] = 1 \;\mathrm{and}\; G_{t_2}[j,k] = 1 \;\Big|\; u_k \Big] \notag \\
&= \Pr[G_{t_1}[i,k] = 1 \;|\; u_k] \cdot \Pr[G_{t_2}[j,k] = 1 \;|\; u_k], \quad \mathrm{because}\;\; G_{t_1}[i,k] \,\perp\, G_{t_2}[j,k] \notag\\
&= w(u_i,u_k)w(u_j,u_k). \label{eq:thm1,proof,E[Gt1 Gt2|u]}
\end{align}
Therefore,
\begin{align}
\E\Big[\rhat_{ij}^k \;|\; u_k\Big] &= \frac{1}{T^2} \left( \sum_{t_2 = T+1}^{2T} \sum_{t_1 = 1}^T \E\Big[G_{t_1}[i,k] G_{t_2}[j,k] \;|\; u_k \Big] \right) \notag\\
&= \frac{1}{T^2} \left( \sum_{t_2 = T+1}^{2T} \sum_{t_1 = 1}^T w(u_i,u_k)w(u_j,u_k) \right), \quad\mbox{by substituting \eref{eq:thm1,proof,E[Gt1 Gt2|u]}} \notag\\
&= w(u_i,u_k)w(u_j,u_k). \label{eq:thm1,proof,E[r|u]}
\end{align}
Then, by the law of iterated expectations, we have
\begin{align}
\E\left[\rhat_{ij}^k \right] &= \E \left[ \E \left[\rhat_{ij}^k \;|\; u_k \right] \right] \notag \\
&= \E\Big[ w(u_i,u_k)w(u_j,u_k) \Big], \quad\mbox{by substituting \eref{eq:thm1,proof,E[r|u]}} \notag\\
&= \int_0^1 w(u_i,v)w(u_j,v) dv, \quad\mathrm{because}\; u_k \sim \mbox{Uniform}(0,1) \notag\\
&= r_{ij}.
\end{align}
Therefore, $\rhat_{ij}^k$ is an unbiased estimator of $r_{ij}$. The proof of $\chat_{ij}$ can be similarly proved by switching roles of $G_t[i,k]$ to $G_t[k,i]$. Since $\rhat_{ij}^k$ and $\chat_{ij}^k$ are both unbiased, $\dhat_{ij}$ must be unbiased.

\vspace{2cm}

Now we proceed to prove the second part of the theorem. We first claim that
\begin{align}
\Var\left[\rhat_{ij}^k \right] \le 2/T \quad\mbox{and} \quad \Var\left[\chat_{ij}^k \right] \le 2/T.
\end{align}
To prove this, we note that
\begin{align*}
\Var\left[\rhat_{ij}^k \right] &= \Var\left[ \sum_{t_2=T+1}^{2T}  \sum_{t_1=1}^T G_{t_1}[ik] G_{t_2}[jk] \right]\\
&= \sum_{t_2=T+1}^{2T} \sum_{t_1=1}^T \Var\Big[  G_{t_1}[ik] G_{t_2}[jk] \Big] \\
&\quad + \sum_{\substack{\tau_2=T+1 \\ \tau_2\not=t_2 }}^{2T} \sum_{t_2=T+1}^{2T} \sum_{ \substack{\tau_1=1\\ \tau_1\not=t_1}}^{T}  \sum_{t_1=1}^T \Cov\Big[  G_{t_1}[ik] G_{t_2}[jk],\; G_{\tau_1}[ik] G_{\tau_2}[jk] \Big]
\end{align*}

\noindent We consider three cases:

\noindent \textbf{Case 1}. First assume $\tau_1 \not= t_1$ and $\tau_2 \not= t_2$. (Occurs $(T-1)^2T^2$ times.)
\begin{align}
&\Cov\Big[  G_{t_1}[ik] G_{t_2}[jk],\; G_{\tau_1}[ik] G_{\tau_2}[jk] \Big] \notag \\
&= \E\Big[ \left(G_{t_1}[ik] G_{t_2}[jk] - \E[G_{t_1}[ik] G_{t_2}[jk]]\right) \left(G_{\tau_1}[ik] G_{\tau_2}[jk] - \E[G_{\tau_1}[ik] G_{\tau_2}[jk]] \right)    \Big] \notag\\
&= \E\Big[ \left(G_{t_1}[ik] G_{t_2}[jk] - w_{ik}w_{jk}\right) \left(G_{\tau_1}[ik] G_{\tau_2}[jk] - w_{ik}w_{jk}\right) \Big] \notag\\
&= \E\Big[ G_{t_1}[ik] G_{t_2}[jk]G_{\tau_1}[ik] G_{\tau_2}[jk] \Big] - \E\Big[ G_{\tau_1}[ik] G_{\tau_2}[jk] \Big]w_{ik}w_{jk} - \E\Big[ G_{t_1}[ik] G_{t_2}[jk] \Big]w_{ik}w_{jk}  + w_{ik}^2w_{jk}^2\notag\\
&= \E\Big[ G_{t_1}[ik] G_{t_2}[jk]G_{\tau_1}[ik] G_{\tau_2}[jk] \Big] - w_{ik}^2w_{jk}^2
\label{eq:Cov}
\end{align}
The first term in \eref{eq:Cov} is $\E\Big[ G_{t_1}[ik] G_{t_2}[jk]G_{\tau_1}[ik] G_{\tau_2}[jk] \Big] = w_{ik}^2w_{jk}^2$ because $G_{t_1}[ik]$, $G_{t_2}[jk]$, $G_{\tau_1}[ik]$ and $G_{\tau_2}[jk]$ are all independent. Therefore, the overall sum in \eref{eq:Cov} is 0.

\noindent \textbf{Case 2}. Next assume that $\tau_1 \not= t_1$ but $\tau_2 = t_2$. (Occurs $(T-1)T^2$ times.) In this case,
\begin{align*}
\E\Big[ G_{t_1}[ik] G_{t_2}[jk]G_{\tau_1}[ik] G_{\tau_2}[jk] \Big]
&= \E\Big[ G_{t_1}[ik] \Big] \E\Big[G_{\tau_1}[ik]\Big] \E\Big[G_{t_2}[jk] G_{\tau_2}[jk] \Big] \\
&= w_{ik} w_{ik} \E\Big[G_{t_2}[jk]^2 \Big]\\
&= w_{ik}^2 w_{jk}.
\end{align*}
Substituting this result into \eref{eq:Cov} yields the covariance
\begin{align*}
\Cov\Big[  G_{t_1}[ik] G_{t_2}[jk],\; G_{\tau_1}[ik] G_{\tau_2}[jk] \Big]
&= w_{ik}^2 w_{jk} - w_{ik}^2w_{jk}^2 = w_{ik}^2 w_{jk}(1 - w_{jk}) \le 1.
\end{align*}

\noindent \textbf{Case 3}. Assume $\tau_1 = t_1$ but $\tau_2 \not= t_2$. (Occurs $(T-1)T^2$ times.) In this case,
\begin{align*}
\E\Big[ G_{t_1}[ik] G_{t_2}[jk]G_{\tau_1}[ik] G_{\tau_2}[jk] \Big] = w_{ik} w_{jk}^2,
\end{align*}
and so the covariance becomes
\begin{align*}
\Cov\Big[  G_{t_1}[ik] G_{t_2}[jk],\; G_{\tau_1}[ik] G_{\tau_2}[jk] \Big] = w_{ik} w_{jk}^2(1 - w_{ik}) \le 1.
\end{align*}

\noindent Combining all 3 cases, we have the following bound:
\begin{align*}
\Var[\widehat{r}_{ij}^k ] &= \frac{1}{T^4} \Var\left[ \sum_{t_1} \sum_{t_2} G_{t_1}[ik] G_{t_2}[jk] \right] \\
&= \frac{1}{T^4} \left[ \sum_{t_1}\sum_{t_2} \Var\Big[  G_{t_1}[ik] G_{t_2}[jk] \Big] + (T-1)T^2 w_{ik}^2 w_{jk} (1 - w_{jk}) + (T-1)T^2 w_{ik} w_{jk}^2 (1 - w_{ik})\right]\\
&= \frac{1}{T^4} \left[ T^2 w_{ik}w_{jk}(1 - w_{ik}w_{jk}) + (T-1)T^2 w_{ik}^2 w_{jk} (1 - w_{jk}) + (T-1)T^2 w_{ik} w_{jk}^2 (1 - w_{ik}) \right] \\
&\le \frac{1}{T^4} \left[ T^2 + 2(T-1)T^2 \right] \\
&= \frac{2T-1}{T^2} \le \frac{2}{T}.
\end{align*}
The bound for $\Var\left[\chat_{ij}^k \right]$ can be proved similarly.

Next, we observe that $G_{t}$ (for any $t$) is a directed graph. So the random variables $G_{t_1}[i,k]$ and $G_{t_1}[k,i]$ are independent. Similarly, $G_{t_2}[j,k]$ and $G_{t_2}[k,j]$ are independent. Therefore, the product variables $G_{t_1}[i,k]G_{t_2}[j,k]$ and $G_{t_1}[k,i]G_{t_2}[k,j]$ must be independent for any fixed $u_i$, $u_j$ and $u_k$, where $i\not=j$ and $k = \{1,\ldots,n\} \backslash \{i,j\}$. Consequently, $\rhat_{ij}^k$ and $\chat_{ij}^k$ are independent, and hence
\begin{align*}
\E[\rhat_{ij}^k\chat_{ij}^k]
&= \E\left[\rhat_{ij}^k \right] \cdot \E\left[ \chat_{ij}^k \right]\\
&= r_{ij} c_{ij},
\end{align*}
which implies that $\rhat_{ij}^k$ and $\chat_{ij}^k$ are uncorrelated: $\E\left[ (\rhat_{ij}^k - r_{ij}) (\chat_{ij}^k-c_{ij})\right] = 0$. Consequently,
\begin{align*}
\Var\left[ \frac{1}{2} \left(\rhat_{ij}^k + \chat_{ij}^k\right)\right] = \frac{1}{4} \left(\Var\left[\rhat_{ij}^k\right] + \Var\left[\chat_{ij}^k\right]\right) \le \frac{1}{T}.
\end{align*}

Since $\rhat_{ij}  =  \frac{1}{S} \sum\limits_{k \in \calS} \rhat_{ij}^k$ and $\chat_{ij}  =  \frac{1}{S} \sum\limits_{k \in \calS} \chat_{ij}^k$, by Bernstein's inequality we have
\begin{align*}
\Pr\left[ \left| \frac{1}{2} \left(\rhat_{ij}+\chat_{ij}\right)  - \frac{1}{2}\left(r_{ij}+c_{ij}\right) \right| > \epsilon \right]
&= \Pr\left[ \left| \frac{1}{S} \sum\limits_{k \in \calS} \frac{1}{2} \left( \rhat_{ij}^k + \chat_{ij}^k\right)  - \frac{1}{2}\left(r_{ij}+c_{ij}\right) \right| > \epsilon \right]  \\
&\le 2e^{-\frac{S\epsilon^2}{2 \left( \Var\left[\frac{1}{2} \left(\rhat_{ij}^k + \chat_{ij}^k\right)\right] + \epsilon/3 \right)}}
\le 2e^{-\frac{S\epsilon^2}{2( 1/T + \epsilon/3)}}.
\end{align*}
Finally, we note that
\begin{align*}
|\dhat_{ij} - d_{ij}|
&\le   \frac{1}{2}\left|\rhat_{ii}+\chat_{ii}-r_{ii}-c_{ii}\right| + \frac{1}{2}\left|\rhat_{ij}+\chat_{ij}-r_{ij}-c_{ij}\right| + \\
&\quad \frac{1}{2}\left|\rhat_{ji}+\chat_{ji}-r_{ji}-c_{ji}\right| + \frac{1}{2}\left|\rhat_{jj}+\chat_{jj}-r_{jj}-c_{jj}\right|.
\end{align*}
Therefore by union bound we have
\begin{align*}
&\Pr[|\dhat_{ij} - d_{ij}|  > \epsilon ] \\
&\le \Pr\Big[ \frac{1}{2}\left|\rhat_{ii}+\chat_{ii}-r_{ii}-c_{ii}\right| + \frac{1}{2}\left|\rhat_{ij}+\chat_{ij}-r_{ij}-c_{ij}\right| + \\
&\quad\quad\quad+ \frac{1}{2}\left|\rhat_{ji}+\chat_{ji}-r_{ji}-c_{ji}\right| + \frac{1}{2}\left|\rhat_{jj}+\chat_{jj}-r_{jj}-c_{jj}\right|
 > \epsilon \Big] \\
&\le \Pr\Big[ \left|\frac{1}{2}\left(\rhat_{ii}+\chat_{ii}\right)-\frac{1}{2}\left(r_{ii}+c_{ii}\right)\right| > \epsilon/4 \Big] +  \Pr\Big[ \left|\frac{1}{2}\left(\rhat_{ij}+\chat_{ij}\right)-\frac{1}{2}\left(r_{ij}+c_{ij}\right)\right| > \epsilon/4 \Big] + \\
&\quad+\Pr\Big[ \left|\frac{1}{2}\left(\rhat_{ji}+\chat_{ji}\right)-\frac{1}{2}\left(r_{ji}+c_{ji}\right)\right| > \epsilon/4 \Big] +
\Pr\Big[ \left|\frac{1}{2}\left(\rhat_{jj}+\chat_{jj}\right)-\frac{1}{2}\left(r_{jj}+c_{jj}\right)\right| > \epsilon/4 \Big]\\
&\le 8 e^{-\frac{S\epsilon^2/16}{2(1/T + \epsilon/12)}} = 8 e^{-\frac{S\epsilon^2}{32/T + 8\epsilon/3}}.
\end{align*}

If the graph is un-directed, then $c_{ij}^k = r_{ij}^k$ and we can only have $\Var\left[ \frac{1}{2}\left(r_{ij}^k+c_{ij}^k \right)\right] \le \frac{2}{T}$ instead of $\Var\left[ \frac{1}{2}\left(r_{ij}^k+c_{ij}^k \right)\right] \le \frac{1}{T}$. In this case,
\begin{align*}
&\Pr[|\dhat_{ij} - d_{ij}|  > \epsilon ] \le 8 e^{-\frac{S\epsilon^2}{64/T + 8\epsilon/3}}.
\end{align*}
\end{proof}

\section{Proofs for Section 3.2}
\begin{theorem}
\label{thm:number of blocks}
Let $\Delta$ be the accuracy parameter and $K$ be the number of blocks estimated by Algorithm 1, then
\begin{equation}
\Pr\left[ K > \frac{QL\sqrt{2}}{\Delta} \right] \le 8n^2 e^{-\frac{S\Delta^4}{128/T + 16\Delta^2/3}},
\end{equation}
where $L$ is the Lipschitz constant and $Q$ is the number of Lipschitz blocks in the ground truth $w$.
\end{theorem}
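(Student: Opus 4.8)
The plan is to bound the random number of blocks $K$ by a deterministic quantity on a high-probability event, combining the Lipschitz structure of $w$ with the concentration bound of Theorem~\ref{thm:dhatij}. First I refine the Lipschitz partition of $[0,1]$ into cells of width at most $\Delta/(\sqrt{2}L)$, each contained in a single Lipschitz block; bounding every Lipschitz block crudely by the whole interval shows there are at most $QL\sqrt{2}/\Delta$ such cells (I disregard integer rounding, which is lower order in the regime of small $\Delta$ of interest). This width is chosen so that if $u_i$ and $u_j$ lie in a common cell, then for almost every $x$ the points $(x,u_i)$ and $(x,u_j)$ lie in a single Lipschitz rectangle $I_a\times I_q$, giving $|w(x,u_i)-w(x,u_j)|\le L|u_i-u_j|$, and likewise for the other integral defining $d_{ij}$; hence $d_{ij}\le L^2|u_i-u_j|^2\le \Delta^2/2$.

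Next, let $\calE$ be the event that $|\dhat_{ij}-d_{ij}|\le \Delta^2/2$ for every pair $i\neq j$, working conditionally on the labels $u_1,\dots,u_n$ so that the cells and their memberships are fixed. On $\calE$, whenever $u_i$ and $u_j$ share a cell we have $\dhat_{ij}\le d_{ij}+\Delta^2/2\le \Delta^2$. I then claim $K\le QL\sqrt{2}/\Delta$ on $\calE$ by inspecting Algorithm~\ref{alg:SBA}: order the pivots $p_1,\dots,p_K$ as they are selected and suppose two of them, $p_a$ and $p_b$ with $a<b$, lie in the same cell. At the iteration that builds $\Bhat_a$, the vertex $p_b$ has not yet been assigned to any block, since it only becomes a pivot at the later step $b$; hence it is scanned, the test $\dhat_{p_a,p_b}\le\Delta^2$ is performed, and this inequality holds on $\calE$ by the previous paragraph, so $p_b$ joins $\Bhat_a$ --- contradicting that $p_b$ is a separate pivot. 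Therefore distinct pivots occupy distinct cells, so $K$ is at most the number of cells.

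To conclude, I bound $\Pr[\calE^c]$. Applying Theorem~\ref{thm:dhatij} with $\epsilon=\Delta^2/2$ gives, for each pair, $\Pr[|\dhat_{ij}-d_{ij}|>\Delta^2/2]\le 8\exp\!\big(-S\Delta^4/(128/T+16\Delta^2/3)\big)$; this is exactly where the numerator $S\Delta^4$ and the denominator $128/T+16\Delta^2/3$ in the statement originate. A union bound over the at most $n^2$ relevant pairs then gives $\Pr[\calE^c]\le 8n^2\exp\!\big(-S\Delta^4/(128/T+16\Delta^2/3)\big)$, and since $\{K>QL\sqrt{2}/\Delta\}\subseteq\calE^c$ by the previous paragraph (for each realization of the labels), integrating over the labels yields the claimed bound.

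The analytic content is light: the concentration estimate is imported wholesale from Theorem~\ref{thm:dhatij}, and the cell count and union bound are routine. The one step requiring genuine care is the combinatorial argument, which rests on the simple but essential observation that any vertex destined to become a pivot is, by construction, still unassigned --- and hence compared against --- every pivot chosen before it. Getting the constants to match also depends on the specific choice $\epsilon=\Delta^2/2$, which is precisely what makes the rescaled Lipschitz bound $d_{ij}\le\Delta^2/2$ and the concentration tail with parameter $\epsilon=\Delta^2/2$ line up to give the $\Delta^4$ in the exponent.
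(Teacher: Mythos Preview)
Your proposal is correct and essentially identical to the paper's proof: both partition $[0,1]$ into at most $QL\sqrt{2}/\Delta$ cells of width $\Delta/(L\sqrt{2})$ inside the Lipschitz blocks, observe that labels sharing a cell satisfy $d_{ij}\le \Delta^2/2$, argue (via pigeonhole in the paper, via your equivalent ``distinct pivots occupy distinct cells'' formulation) that $K$ exceeding the cell count forces some pivot pair with $\dhat_{p_i,p_j}-d_{p_i,p_j}>\Delta^2/2$, and then apply Theorem~\ref{thm:dhatij} with $\epsilon=\Delta^2/2$ together with a union bound over at most $n^2$ pairs. Your explicit justification that a later pivot must have been scanned against every earlier one is, if anything, a slight clarification of a step the paper leaves implicit.
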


\begin{proof}
Recall that in defining the Lipschitz condition of $w$ (Section 2.1), we defined a sequence of non-overlapping intervals $I_k = [\alpha_k,\,\alpha_{k+1}]$, where $0 = \alpha_0 < \ldots < \alpha_Q = 1$, and $Q$ is the number of Lipschitz blocks of $w$. For each of the interval $I_k$, we divide it into $R \defequal \frac{L\sqrt{2}}{\Delta}$ subintervals of equal size $1/R$. Thus, the distance between any two elements in the same subinterval is at most $1/R$. Also, the total number of subintervals over $[0,1]$ is $QR$.

Now, suppose that there are $K > QR = \frac{QL\sqrt{2}}{\Delta}$ blocks defined by the algorithm, and denote the $K$ pivots be $p_1,\ldots,p_K$. By the pigeonhole principle, there must be at least two pivots $p_i$ and $p_j$ in the same sub-interval. In this case, the distance $d_{p_i,p_j}$ must satisfy the following condition:
\begin{align*}
d_{p_i,p_j}
&= \frac{1}{2} \left(\int_0^1 (w(x,u_{p_i})-w(x,u_{p_j}))^2 dx  + \int_0^1 (w(u_{p_i},y)-w(u_{p_j},y))^2 dy\right)\\
&\le L^2 (u_{p_i} - u_{p_j})^2 \\
&\le L^2 \frac{1}{R^2} =  \frac{\Delta^2}{2}.
\end{align*}
However, from the algorithm it holds that $\dhat_{p_i,p_j} \ge \Delta^2$. So, if $K > QR$, then $\dhat_{p_i,p_j} - d_{p_i,p_j} > \frac{\Delta^2}{2}$.

Let $\calE$ be the following event:
\begin{equation*}
\calE = \left\{ \dhat_{p_i,p_j} - d_{p_i,p_j} > \frac{\Delta^2}{2} \quad\mbox{for at least one pair of } p_i,p_j \right\}.
\end{equation*}
Then, since the event $\calE$ is a consequence of the event $\{K > QR\}$, we have
\begin{equation*}
\Pr\left[ K > \frac{QL\sqrt{2}}{\Delta} \right] = \Pr[ K > QR ] \le \Pr[\;\calE\;].
\end{equation*}
To bound $\Pr[\calE]$, we observe that
\begin{align*}
\Pr\left[ \dhat_{p_i,p_j} - d_{p_i,p_j} > \frac{\Delta^2}{2} \;\Big|\; p_i, p_j \right]
\le 8 e^{-\frac{S(\Delta^2/2)^2}{32/T + 8(\Delta^2/2)/3}} = 8 e^{-\frac{S\Delta^4}{128/T + 16\Delta^2/3}}.
\end{align*}
Therefore, by union bound, 
\begin{align*}
\Pr\left[ \calE \;\Big|\; p_1,\ldots,p_K\right]
&\le \sum_{p_i,p_j} \Pr\left[ \dhat_{p_i,p_j} - d_{p_i,p_j} > \frac{\Delta^2}{2} \;\Big|\; p_i, p_j  \right]\\
&\le 8n^2 e^{-\frac{S\Delta^4}{128/T + 16\Delta^2/3}},
\end{align*}
and hence,
\begin{align*}
\Pr\left[ \; \calE \;\right]
&= \sum_{p_1,\ldots,p_K} \Pr\left[ \calE \,|\, p_1,\ldots,p_K \right] \Pr\left[ p_1,\ldots,p_K \right]\\
&\le \left(8n^2 e^{-\frac{S\Delta^4}{128/T + 16\Delta^2/3}}\right) \cdot \sum_{p_1,\ldots,p_K} \Pr\left[ p_1,\ldots,p_K \right]\\
&= 8n^2 e^{-\frac{S\Delta^4}{128/T + 16\Delta^2/3}}.
\end{align*}
This completes the proof.
\end{proof}

\section{Proofs for Section 3.3}
\setcounter{lemma}{0}
\begin{lemma}
\label{lemma:bound wbar and w}
Let $\Bhat_i = \{i_1,i_2,\ldots,i_{|\Bhat_i|}\}$ and $\Bhat_j = \{j_1,j_2,\ldots,j_{|\Bhat_j|}\}$ be two clusters returned by the Algorithm. Suppose that $\{u_{i_1},u_{i_2},\ldots,u_{i_{|\Bhat_i|}}\}$ and $\{u_{j_1},u_{j_2},\ldots,u_{j_{|\Bhat_j|}}\}$ are the ground truth labels of the vertices in $\Bhat_i$ and $\Bhat_j$, respectively. Let
\begin{equation}
\wbar_{ij} = \frac{1}{|\Bhat_i||\Bhat_j|} \sum_{i_x \in \Bhat_i} \sum_{j_x \in \Bhat_j} w(u_{i_x}, u_{j_x}).
\end{equation}
Assume that the precision parameter satisfies $\Delta^2 < \frac{\delta^2 L}{4}$, where $L$ is the Lipschitz constant and $\delta$ is the size of the smallest Lipschitz interval. Then, for any $i_v \in \Bhat_i$ and $j_v \in \Bhat_j$,
\begin{equation}
\Pr\left[ |\wbar_{ij} - w(u_{i_v,j_v})| > 8\Delta^{1/2} L^{1/4} \right] \le 32 |\Bhat_i| |\Bhat_j| e^{-\frac{S\Delta^4}{32/T + 8\Delta^2/3}}.
\end{equation}
\end{lemma}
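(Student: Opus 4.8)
The plan is to exploit the invariant enforced by Algorithm~1: every vertex placed in a block lies within \emph{estimated} distance $\Delta^2$ of that block's pivot. Write $p_i$ for the pivot of $\Bhat_i$ and $p_j$ for the pivot of $\Bhat_j$, so that $\dhat_{p_i,i_x}\le\Delta^2$ for every $i_x\in\Bhat_i$ and $\dhat_{p_j,j_x}\le\Delta^2$ for every $j_x\in\Bhat_j$. Applying the concentration bound of \thref{thm:unbiasedness of dhatij} with $\epsilon=\Delta^2$ (whose exponent is exactly $S\Delta^4/(32/T+8\Delta^2/3)$) and taking a union bound over the at most $4|\Bhat_i|\,|\Bhat_j|$ vertex pairs that will appear below, we obtain that, outside an event of probability at most $32|\Bhat_i|\,|\Bhat_j|\,e^{-S\Delta^4/(32/T+8\Delta^2/3)}$, the \emph{true} distances satisfy $d_{p_i,i_x}\le 2\Delta^2$ and $d_{p_j,j_x}\le 2\Delta^2$ for all vertices involved. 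Everything that follows is argued on this good event.

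Next I would convert ``small $d$'' into ``close graphon slices.'' Since $d_{ab}=\tfrac12\bigl(\norm{w(\cdot,u_a)-w(\cdot,u_b)}_2^2+\norm{w(u_a,\cdot)-w(u_b,\cdot)}_2^2\bigr)$, the bound $d_{p_i,i_x}\le 2\Delta^2$ forces both the row-slice difference $w(u_{i_x},\cdot)-w(u_{p_i},\cdot)$ and the column-slice difference $w(\cdot,u_{i_x})-w(\cdot,u_{p_i})$ to have $L^2$-norm at most $2\Delta$, and similarly on the $\Bhat_j$ side. The crucial step is to upgrade this $L^2$-control to sup-norm control. On every Lipschitz interval each such slice difference is a $2L$-Lipschitz function, and every Lipschitz interval has length at least $\delta$. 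A ``tent'' estimate then shows that a function which is $M$-Lipschitz on an interval of length $\ge\delta$ and has $L^2$-norm $\le\eta$ there has sup-norm controlled by $\eta$, $M$ and $\delta$: if the sup-norm is $H$, attained at $y_0$, then $|h|\ge H/2$ on a one-sided neighbourhood of $y_0$ of length $\min\{H/(2M),\delta/2\}$, whence $\eta^2\gtrsim H^2\min\{H/M,\delta\}$. Feeding in $\eta=O(\Delta)$, $M=2L$, and using the hypothesis $\Delta^2<\delta^2 L/4$ (equivalently $\delta>2\Delta/\sqrt L$), which guarantees the neighbourhood fits inside a single Lipschitz block and puts the estimate in the regime where $H\lesssim\eta/\sqrt\delta$, gives a sup-norm bound of order $\Delta^{1/2}L^{1/4}$; tracking constants one gets $\norm{w(u_{i_x},\cdot)-w(u_{p_i},\cdot)}_\infty\le 2\Delta^{1/2}L^{1/4}$, and likewise for the column slices and for the $\Bhat_j$-side slices.

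Finally I would assemble the bound. Because $\wbar_{ij}$ is the average of $w(u_{i_x},u_{j_x})$ over $i_x\in\Bhat_i$, $j_x\in\Bhat_j$, we have $|\wbar_{ij}-w(u_{i_v},u_{j_v})|\le\frac{1}{|\Bhat_i|\,|\Bhat_j|}\sum_{i_x,j_x}\bigl|w(u_{i_x},u_{j_x})-w(u_{i_v},u_{j_v})\bigr|$, and for each term I would move one coordinate at a time through the pivots, $w(u_{i_x},u_{j_x})\to w(u_{p_i},u_{j_x})\to w(u_{p_i},u_{p_j})\to w(u_{i_v},u_{p_j})\to w(u_{i_v},u_{j_v})$, bounding each of the four increments by a sup-norm of a row- or column-slice difference controlled in the previous step. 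This yields $|w(u_{i_x},u_{j_x})-w(u_{i_v},u_{j_v})|\le 4\cdot 2\Delta^{1/2}L^{1/4}=8\Delta^{1/2}L^{1/4}$ for every $i_x,j_x$ on the good event, hence the same bound for the average, which is precisely the assertion of the lemma.

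The main obstacle is the $L^2\!\to\!L^\infty$ upgrade in the middle step: controlling $d$ only gives $L^2$-closeness of the slices, while the pointwise comparison $|w(u_{i_x},u_{j_x})-w(u_{p_i},u_{j_x})|$ needs sup-norm closeness, and an $L^2$-small function can carry a tall narrow spike. The piecewise-Lipschitz structure rules out spikes narrower than $\sim H/L$, but one must handle spikes sitting near the endpoint of a Lipschitz interval (only a one-sided neighbourhood is available) and decide which regime --- spike width limited by $H/L$ versus by $\delta$ --- applies; the hypothesis $\Delta^2<\delta^2L/4$ is exactly what keeps the argument in the favourable regime and produces the exponents $\tfrac12$ on $\Delta$ and $\tfrac14$ on $L$. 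Propagating the numerical constants through this estimate and through the union bound is the only other place requiring care.
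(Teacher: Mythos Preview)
Your proposal is correct and mirrors the paper's own proof: the paper likewise uses the algorithm invariant $\dhat_{p_i,i_x}\le\Delta^2$ together with the concentration bound of \thref{thm:unbiasedness of dhatij} to obtain $d_{p_i,i_x}\le 2\Delta^2$ on a good event, upgrades the resulting $L^2$-closeness of slices to sup-norm closeness via a separate lemma (their Lemma~\ref{lemma:consistency1}) whose proof is exactly the tent/triangle-area estimate you sketch, and then triangulates one coordinate at a time through the pivots before taking a union bound. The constants, the role of the hypothesis $\Delta^2<\delta^2L/4$ in keeping the tent estimate in the correct regime, and the final $32|\Bhat_i||\Bhat_j|$ factor all match.
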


\begin{proof}
Let $i_p \in \Bhat_i$ and $j_p \in \Bhat_j$ be pivots of the clusters $\Bhat_i$ and $\Bhat_j$, respectively. By definition of pivots, it holds that $|\dhat_{i_p,i_v}| \le \Delta^2$ and $|\dhat_{j_p,j_v}| \le \Delta^2$ for any vertices $i_v \in \Bhat_i$ and $j_v \in \Bhat_j$. Therefore,
\begin{align*}
\begin{array}{rrl}
                & 0  &\le -|\dhat_{i_p,i_v}| + \Delta^2 \le -\dhat_{i_p,i_v} + \Delta^2 \\
\Rightarrow     &    d_{i_p,i_v} &\le d_{i_p,i_v} - \dhat_{i_p,i_v} + \Delta^2 \le |d_{i_p,i_v} - \dhat_{i_p,i_v}| + \Delta^2,
\end{array}
\end{align*}
which implies that
\begin{align*}
\Pr\left[ d_{i_p,i_v} > 2\Delta^2 \right]
&\le \Pr\left[ |d_{i_p,i_v}-\dhat_{i_p,i_v}| + \Delta^2 > 2\Delta^2 \right] \\
&=   \Pr\left[ |d_{i_p,i_v}-\dhat_{i_p,i_v}| > \Delta^2\right]\\
&\le 8e^{-\frac{S\Delta^4}{32/T + 8\Delta^2/3}}.
\end{align*}
Similarly, we have $\Pr\left[ d_{j_p,j_v} > 2\Delta^2 \right] \le 8e^{-\frac{S\Delta^4}{32/T + 8\Delta^2/3}}$. Thus,
\begin{align*}
\Pr\left[d_{i_p,i_v} > 2\Delta^2 \;\cup\; d_{j_p,j_v} > 2\Delta^2 \right]
&\le \Pr\left[d_{i_p,i_v} > 2\Delta^2\right] + \Pr\left[d_{j_p,j_v} > 2\Delta^2 \right]\\
&\le 16 e^{-\frac{S\Delta^4}{32/T + 8\Delta^2/3}}.
\end{align*}

Let $d_{ij}^c = \int_0^1 (w(x,u_i)-w(x,u_j))^2 dx$ and $d_{ij}^r = \int_0^1 (w(u_i,y)-w(u_j,y))^2 dy$. By Lemma \ref{lemma:consistency1}, it holds that for any $0 < \left(\epsilon/2\right)^2 < 2\delta L$, if $d_{i,j}^c \le \frac{ (\epsilon/2)^4}{8L} = \frac{\epsilon^4}{128L}$ and $d_{i,j}^r \le \frac{\epsilon^4}{128L}$, then
\begin{align*}
\sup_{x \in [0,1]} |w(x,u_i) - w(x,u_j)| \le \frac{\epsilon}{2},\\
\sup_{y \in [0,1]} |w(u_i,y) - w(u_j,y)| \le \frac{\epsilon}{2}.
\end{align*}
Therefore, if $d_{i_p,i_v}^c \le \frac{\epsilon^4}{128L}$, $d_{i_p,i_v}^r \le \frac{\epsilon^4}{128L}$, $d_{j_p,j_v}^c \le \frac{\epsilon^4}{128L}$ and $d_{j_p,j_v}^r \le \frac{\epsilon^4}{128L}$, then for pivots $i_p \in \Bhat_{i}$, $j_p \in \Bhat_j$, and vertex $i_v \in \Bhat_i$, $j_v \in \Bhat_j$:
\begin{align}
|w(u_{i_v}, u_{j_v}) - w(u_{i_p},u_{j_p})|
&\le |w(u_{i_v}, u_{j_v}) - w(u_{i_v},u_{j_p})| + |w(u_{i_v},u_{j_p}) - w(u_{i_p}, u_{j_p})| \notag \\
&\le \sup_{x \in [0,1]} |w(x,u_{j_v}) - w(x,u_{j_p})| + \sup_{y \in [0,1]} |w(u_{i_v},y) - w(u_{j_p},y)| \notag \\
&= \frac{\epsilon}{2} + \frac{\epsilon}{2} = \epsilon. \label{eq:lemma,bound wbar and w,eq1}
\end{align}
Also, if $d_{i_p,i_x}^c \le \frac{\epsilon^4}{128L}$, $d_{i_p,i_x}^r \le \frac{\epsilon^4}{128L}$, $d_{j_p,j_x}^c \le \frac{\epsilon^4}{128L}$ and $d_{j_p,j_x}^r \le \frac{\epsilon^4}{128L}$ for vertex every $i_x \in \Bhat_i$, $j_x \in \Bhat_j$
\begin{align}
&\left| \frac{1}{|\Bhat_i||\Bhat_j|} \sum_{i_x \in \Bhat_i} \sum_{j_x \in \Bhat_j} w(u_{i_x}, u_{j_x}) - w(u_{i_p},u_{j_p})\right| \notag\\
&\le \left| \frac{1}{|\Bhat_i||\Bhat_j|} \sum_{i_x \in \Bhat_i} \sum_{j_x \in \Bhat_j} w(u_{i_x}, u_{j_x}) -\frac{1}{|\Bhat_i|} \sum_{i_x \in \Bhat_i} w(u_{i_x},u_{j_p}) \right|
+ \left| \frac{1}{|\Bhat_i|} \sum_{i_x \in \Bhat_i} w(u_{i_x},u_{j_p}) - w(u_{i_p},u_{j_p})\right| \notag \\
&\le \frac{1}{|\Bhat_i|}\frac{1}{|\Bhat_j|}  \sum_{i_x \in \Bhat_i} \sum_{j_x \in \Bhat_j} \left| w(u_{i_x}, u_{j_x}) - w(u_{i_x},u_{j_p}) \right|
+ \frac{1}{|\Bhat_i|}  \sum_{i_x \in \Bhat_i}  \left| w(u_{i_x},u_{j_p}) - w(u_{i_p},u_{j_p}) \right| \notag \\
&\le \frac{1}{|\Bhat_i|}\frac{1}{|\Bhat_j|} \sum_{i_x \in \Bhat_i} \sum_{j_x \in \Bhat_j} \frac{\epsilon}{2} +  \frac{1}{|\Bhat_i|}  \sum_{i_x \in \Bhat_i} \frac{\epsilon}{2} = \epsilon.
\label{eq:lemma,bound wbar and w,eq2}
\end{align}
Combining \eref{eq:lemma,bound wbar and w,eq1} and \eref{eq:lemma,bound wbar and w,eq2} with triangle inequality yields
\begin{align*}
\left| \frac{1}{|\Bhat_i||\Bhat_j|} \sum_{i_x \in \Bhat_i} \sum_{j_x \in \Bhat_j} w(u_{i_x}, u_{j_x}) - w(u_{i_v},u_{j_v})\right| \le 2\epsilon.
\end{align*}
Consequently, by contrapositive this implies that
\begin{align*}
&\quad           \left| \wbar_{ij} - w(u_{i_v},u_{j_v})\right| > 2\epsilon \\
&\Rightarrow \bigcup_{i_x \in \Bhat_i, j_x \in \Bhat_j} \left( \; d_{i_p,i_x}^c > \frac{\epsilon^4}{128L} \;\cup\; d_{i_p,i_x}^r > \frac{\epsilon^4}{128L} \;\cup\; d_{j_p,j_x}^c > \frac{\epsilon^4}{128L} \;\cup\; d_{j_p,j_x}^r > \frac{\epsilon^4}{128L}\right)\\
&\Rightarrow  \bigcup_{i_x \in \Bhat_i, j_x \in \Bhat_j} \left(  \; d_{i_p,i_x} > \frac{\epsilon^4}{128L} \;\cup\; d_{j_p,j_x} > \frac{\epsilon^4}{128L}\right).
\end{align*}
Therefore,
\begin{align*}
\Pr\left[ \left| \wbar_{ij} - w(u_{i_v},u_{j_v})\right| > 2\epsilon \right]
&\le  \Pr\left[\bigcup_{i_x \in \Bhat_i, j_x \in \Bhat_j} \left( d_{i_p,i_x} > \frac{\epsilon^4}{128L} \;\cup\; d_{j_p,j_x} > \frac{\epsilon^4}{128L} \right)\right] \\
&\le  \sum_{i_x \in \Bhat_i, j_x \in \Bhat_j}\left(\Pr\left[d_{i_p,i_x} > \frac{\epsilon^4}{128L}\right] + \Pr\left[d_{j_p,j_x} > \frac{\epsilon^4}{128L} \right]\right).
\end{align*}

Assuming $\Delta < \delta\sqrt{L}/2$ and setting $\epsilon = 4\Delta^{1/2}L^{1/4}$, we have $0 < \left(\epsilon/2\right)^2 < 2\delta L$ and thus
\begin{align*}
\Pr\left[ \left| \wbar_{ij} - w(u_{i_v},u_{j_v})\right| > 8\Delta^{1/2} L^{1/4} \right]
&\le  \sum_{i_x \in \Bhat_i, j_x \in \Bhat_j}\left(  \Pr\left[d_{i_p,i_x} > 2\Delta^2 \right] + \Pr\left[d_{j_p,j_x} > 2\Delta^2 \right]\right)\\
&\le  32  |\Bhat_i| |\Bhat_j| e^{-\frac{S\Delta^4}{32/T + 8\Delta^2/3}}.
\end{align*}
\end{proof}

\begin{lemma}
\label{lemma:bound what and wbar}
Let $\Bhat_i = \{i_1,i_2,\ldots,i_{|\Bhat_i|}\}$ and $\Bhat_j = \{j_1,j_2,\ldots,j_{|\Bhat_j|}\}$ be two clusters returned by the Algorithm. Suppose that $\{u_{i_1},u_{i_2},\ldots,u_{i_{|\Bhat_i|}}\}$ and $\{u_{j_1},u_{j_2},\ldots,u_{j_{|\Bhat_j|}}\}$ are the ground truth labels of the vertices in $\Bhat_i$ and $\Bhat_j$, respectively. Let
\begin{align*}
\what_{ij} &= \frac{1}{|\Bhat_i||\Bhat_j|} \sum_{i_x \in \Bhat_i} \sum_{j_x \in \Bhat_j} \left( \frac{G_1[i_x,j_x] + \ldots + G_{2T}[i_x,j_x]}{2T}\right),\\
\wbar_{ij} &= \frac{1}{|\Bhat_i||\Bhat_j|} \sum_{i_x \in \Bhat_i} \sum_{j_x \in \Bhat_j} w(u_{i_x}, u_{j_x}).
\end{align*}
Then,
\begin{align*}
\Pr\left[ |\what_{ij} - \wbar_{ij}| >  8\Delta^{1/2} L^{1/4}  \right] \le 2 e^{-256 (T |\Bhat_i|\,|\Bhat_j| \sqrt{L} \Delta)} + 32 |\Bhat_i|^2 |\Bhat_j|^2 e^{-\frac{S\Delta^4}{32/T + 8\Delta^2/3}}.
\end{align*}
\end{lemma}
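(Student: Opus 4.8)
The plan is to control $|\what_{ij}-\wbar_{ij}|$ by a concentration inequality, after first isolating the randomness coming from the clustering in Algorithm~\ref{alg:SBA}. Throughout I would condition on the ground-truth labels $u_1,\dots,u_n$; given the labels, the graphs $G_1,\dots,G_{2T}$ are independent and each $G_t[a,b]$ is $\mathrm{Bernoulli}\bigl(w(u_a,u_b)\bigr)$, independently across $t$ and across pairs $(a,b)$. Hence $\wbar_{ij}$ is \emph{exactly} the conditional mean of $\what_{ij}$ given the labels together with the (random) partition $\Bhat_1,\dots,\Bhat_K$, and $\what_{ij}-\wbar_{ij}$ is a centred average of the $N=2T\,|\Bhat_i|\,|\Bhat_j|$ bounded variables $\{G_t[i_x,j_x]:i_x\in\Bhat_i,\ j_x\in\Bhat_j,\ 1\le t\le 2T\}$.

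First I would introduce the ``good clustering'' event $\calG$: the event that for every $i_x\in\Bhat_i$ and every $j_x\in\Bhat_j$ the cell mean $w(u_{i_x},u_{j_x})$ lies within $8\Delta^{1/2}L^{1/4}$ of $\wbar_{ij}$, i.e.\ the labels collected into $\Bhat_i$ and $\Bhat_j$ are as tightly concentrated as Lemma~\ref{lemma:bound wbar and w} guarantees for a single pivot pair. Applying Lemma~\ref{lemma:bound wbar and w} to each of the $|\Bhat_i|\,|\Bhat_j|$ pairs $(i_x,j_x)$ and taking a union bound gives
\[
\Pr[\calG^c]\ \le\ 32\,|\Bhat_i|^2|\Bhat_j|^2\, e^{-\frac{S\Delta^4}{32/T+8\Delta^2/3}},
\]
which is the second term in the claimed bound. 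Then I split
\[
\Pr\bigl[|\what_{ij}-\wbar_{ij}|>8\Delta^{1/2}L^{1/4}\bigr]\ \le\ \Pr\bigl[|\what_{ij}-\wbar_{ij}|>8\Delta^{1/2}L^{1/4}\ \big|\ \calG\bigr]\ +\ \Pr[\calG^c].
\]
For the first summand I would argue that on $\calG$ the realized partition is one under which the $N$ entries $G_t[i_x,j_x]$ may be treated as conditionally independent $[0,1]$-valued variables whose average mean is exactly $\wbar_{ij}$, so Hoeffding's inequality over these $N$ summands yields $\Pr[\,|\what_{ij}-\wbar_{ij}|>s\mid\calG\,]\le 2e^{-2Ns^2}$; plugging $s=8\Delta^{1/2}L^{1/4}$, so that $s^2=64\sqrt{L}\,\Delta$ and $2Ns^2=256\,T|\Bhat_i||\Bhat_j|\sqrt{L}\,\Delta$, produces exactly $2e^{-256\,(T|\Bhat_i|\,|\Bhat_j|\sqrt{L}\,\Delta)}$. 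Adding the two pieces gives the stated inequality.

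The hard part is the step I just glossed: the partition $\Bhat_1,\dots,\Bhat_K$ is produced from the \emph{same} graphs $G_1,\dots,G_{2T}$ through the distances $\dhat$, and the cell entries $G_t[i_x,j_x]$ appearing in $\what_{ij}$ are not independent of that partition (for instance $G_t[i_x,j_x]$ enters $\dhat_{i_p,i_x}$ whenever $j_x$ is used as a summation index when clustering $\Bhat_i$), so one cannot naively condition on ``$\Bhat_i=S_i,\ \Bhat_j=S_j$'' and invoke Hoeffding as though the law of the cell entries were still product-Bernoulli. The role of $\calG$ — and the reason a Lemma~\ref{lemma:bound wbar and w}-type term appears at all, even though a clean Hoeffding bound would by itself already give the first term — is precisely to quarantine the regime in which this coupling is harmless, paying $\Pr[\calG^c]$ for it. Setting up the conditioning ($\sigma$-algebra, and the fact that $N$ itself is random, which may call for an Azuma-type argument revealing entries sequentially) so that the concentration step is rigorous is the delicate point; once that is in place, the remaining manipulations — substituting $s=8\Delta^{1/2}L^{1/4}$, matching constants, and the union bounds — are routine.
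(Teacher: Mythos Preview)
Your approach is essentially the paper's: it defines the same good event (called $\calE$ there), bounds $\Pr[\overline{\calE}]$ by a union bound over the $|\Bhat_i|\,|\Bhat_j|$ pairs using Lemma~\ref{lemma:bound wbar and w}, and on $\calE$ applies Hoeffding to the $2T|\Bhat_i|\,|\Bhat_j|$ Bernoulli summands (with $\epsilon=4\Delta^{1/2}L^{1/4}$, so $2\epsilon=8\Delta^{1/2}L^{1/4}$) to obtain the first term. The dependence subtlety you flag --- that the partition is built from the same graphs appearing in $\what_{ij}$ --- is not addressed in the paper either; it simply treats the clusters as fixed and invokes Hoeffding directly.
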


\begin{proof}
There are two possible situations that we need to consider. 

\noindent \textbf{Case 1}: For any vertex $i_v \in \Bhat_i$ and $j_v \in \Bhat_j$, the estimate of the previous lemma $\wbar_{ij}$ (independent of $(i_v,j_v)$) is close to the ground truth $w_{ij} \defequal w(u_{i_v},u_{j_v})$. In other words, we want $w(u_{i_v},u_{j_v})$ to stay close for all $i_v \in \Bhat_i$ and $j_v \in \Bhat_j$, so that the difference $|w_{ij} - \wbar_{ij}|$ remains small for all $i_v \in \Bhat_i$ and $j_v \in \Bhat_j$.

\noindent \textbf{Case 2}: Complement of case 1.

To encapsulate these two cases, we first define the event
\begin{equation*}
\calE = \left\{ |w_{ij} - \wbar_{ij}| \le  8\Delta^{1/2} L^{1/4} , \forall i_v \in \Bhat_i, \; j_v \in \Bhat_j\right\}
\end{equation*}
and define $\overline{\calE}$ be the complement of $\calE$. Then,
\begin{align*}
\Pr\left[ |\what_{ij} - \wbar_{ij}| > 8\Delta^{1/2} L^{1/4} \right]
&= \Pr\left[  |\what_{ij} - \wbar_{ij}| > 8\Delta^{1/2} L^{1/4} \;\Big|\; \calE \right] \Pr\left[\calE\right]\\
&\quad + \Pr\left[ | \what_{ij} - \wbar_{ij}| > 8\Delta^{1/2} L^{1/4} \;\Big|\; \overline{\calE} \right] \Pr\left[\overline{\calE}\right]\\
&\le \Pr\left[  |\what_{ij} - \wbar_{ij}| > 8\Delta^{1/2} L^{1/4} \;\Big|\; \calE \right] + \Pr\left[\overline{\calE}\right].
\end{align*}
So it remains to bound the two probabilities.

Conditioning on $\calE$, it holds that
\begin{equation*}
\wbar_{ij} - \epsilon \le w_{ij} \le \wbar + \epsilon.
\end{equation*}
Fix a vertex pair $(i_v,j_v)$, we note that $G_1[i_v,j_v],\ldots,G_{2T}[i_v,j_v]$ are independent Bernoulli random variable with common mean $w(u_{i_v},u_{j_v})$. Denote
\begin{equation*}
\what_{ij} = \frac{1}{2T|\Bhat_i||\Bhat_j|} \sum_{t = 1}^{2T }\sum_{i_x \in \Bhat_i} \sum_{j_x\in \Bhat_j} G_{t}[i_x,j_x],
\end{equation*}
then by Hoeffding inequality we have
\begin{align*}
\Pr\left[ \what_{ij} - \wbar_{ij} > 2\epsilon \;\Big|\; \calE \right]
&=   \Pr\left[ \what_{ij} > \wbar_{ij} + 2\epsilon \;\Big|\; \calE \right]\\
&\le \Pr\left[ \what_{ij} > w_{ij} + \epsilon \;\Big|\; \calE \right]\\
&\le e^{-2 (2T |\Bhat_i||\Bhat_j| \epsilon^2)},
\end{align*}
and similarly $\Pr\left[ \what_{ij} - \wbar_{ij} < -2\epsilon \;\Big|\; \calE \right] \le e^{-2 (2T |\Bhat_i|\,|\Bhat_j| \epsilon^2)}$. Therefore,
\begin{align*}
\Pr\left[ |\what_{ij} - \wbar_{ij}| > 2\epsilon \;\Big|\; \calE \right]
&\le 2 e^{-2 (2T |\Bhat_i|\,|\Bhat_j| \epsilon^2)}.
\end{align*}
Substituting $\epsilon = 4\Delta^{1/2} L^{1/4}$, we have
\begin{align*}
\Pr\left[ |\what_{ij} - \wbar_{ij}| > 8\Delta^{1/2} L^{1/4} \;\Big|\; \calE \right]
&\le 2 e^{-128 ( |\Bhat_i|\,|\Bhat_j| (2T) \sqrt{L} \Delta)}.
\end{align*}

The second probability is bounded as follows. Since $\overline{\calE}$ is the complement of $\calE$, it is bounded by the probability where at least one $(i_v,j_v)$ violates the condition. Therefore,
\begin{align*}
\Pr\left[\overline{\calE}\right]
&= \Pr\Big[ \mbox{at least one}\; i_v, j_v \;\mbox{s.t.}\;
\left| w(u_{i_v},u_{j_v}) - \wbar_{ij} \right| > 8\Delta^{1/2} L^{1/4} \Big] \\
&\le \sum_{i_v \in \Bhat_i} \sum_{j_v \in \Bhat_j} \Pr\left[ |w(u_{i_v},u_{j_v}) - \wbar_{ij}| > 8\Delta^{1/2} L^{1/4}  \right]\\
&\le 32 |\Bhat_i|^2 |\Bhat_j|^2 e^{-\frac{S\Delta^4}{32/T + 8\Delta^2/3}}.
\end{align*}

Finally, by combining the above results we have
\begin{align*}
\Pr\left[ |\what_{ij} - \wbar_{ij}| > 8\Delta^{1/2} L^{1/4} \right]  \le 2 e^{-256 (T |\Bhat_i|\,|\Bhat_j| \sqrt{L} \Delta)} + 32 |\Bhat_i|^2 |\Bhat_j|^2 e^{-\frac{S\Delta^4}{32/T + 8\Delta^2/3}}.
\end{align*}
\end{proof}

\begin{lemma}
\label{lemma:bound w and what}
Let $\Bhat_i = \{i_1,i_2,\ldots,i_{|\Bhat_i|}\}$ and $\Bhat_j = \{j_1,j_2,\ldots,j_{|\Bhat_j|}\}$ be two clusters returned by the Algorithm. Suppose that $\{u_{i_1},u_{i_2},\ldots,u_{i_{|\Bhat_i|}}\}$ and $\{u_{j_1},u_{j_2},\ldots,u_{j_{|\Bhat_j|}}\}$ are the ground truth labels of the vertices in $\Bhat_i$ and $\Bhat_j$, respectively. Let
\begin{align*}
\what_{ij} &= \frac{1}{|\Bhat_i||\Bhat_j|} \sum_{i_x \in \Bhat_i} \sum_{j_x \in \Bhat_j} \left( \frac{G_1[i_x,j_x] + \ldots + G_{2T}[i_x,j_x]}{2T}\right).
\end{align*}
Then,
\begin{align*}
\Pr\left[ |\what_{ij} - w_{ij}| > 16\Delta^{1/2} L^{1/4} \right]
\le 2 e^{-256 (T |\Bhat_i|\,|\Bhat_j| \sqrt{L} \Delta)} + 64 n^4 e^{-\frac{S\Delta^4}{32/T + 8\Delta^2/3}}.
\end{align*}
\end{lemma}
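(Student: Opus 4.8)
The plan is to derive this bound as a direct consequence of Lemma~\ref{lemma:bound wbar and w} and Lemma~\ref{lemma:bound what and wbar}, glued together by the triangle inequality and a union bound. Concretely, I would first fix any $i_v \in \Bhat_i$, $j_v \in \Bhat_j$ and write $w_{ij} = w(u_{i_v},u_{j_v})$, then split
\[
|\what_{ij} - w_{ij}| \;\le\; |\what_{ij} - \wbar_{ij}| \;+\; |\wbar_{ij} - w_{ij}|,
\]
so that the event $\{|\what_{ij}-w_{ij}| > 16\Delta^{1/2}L^{1/4}\}$ is contained in the union $\{|\what_{ij}-\wbar_{ij}| > 8\Delta^{1/2}L^{1/4}\} \cup \{|\wbar_{ij}-w_{ij}| > 8\Delta^{1/2}L^{1/4}\}$. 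The union bound then reduces the problem to the two probabilities already estimated in the preceding lemmas.

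Next I would substitute the two known bounds. Lemma~\ref{lemma:bound what and wbar} gives $\Pr[|\what_{ij}-\wbar_{ij}| > 8\Delta^{1/2}L^{1/4}] \le 2e^{-256(T|\Bhat_i||\Bhat_j|\sqrt{L}\Delta)} + 32|\Bhat_i|^2|\Bhat_j|^2 e^{-S\Delta^4/(32/T + 8\Delta^2/3)}$, while Lemma~\ref{lemma:bound wbar and w} gives $\Pr[|\wbar_{ij}-w_{ij}| > 8\Delta^{1/2}L^{1/4}] \le 32|\Bhat_i||\Bhat_j|\,e^{-S\Delta^4/(32/T + 8\Delta^2/3)}$. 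Adding these, the two $e^{-S\Delta^4/(\cdot)}$ contributions merge into a single term with prefactor $32|\Bhat_i|^2|\Bhat_j|^2 + 32|\Bhat_i||\Bhat_j|$, and the Hoeffding-type term $2e^{-256(T|\Bhat_i||\Bhat_j|\sqrt{L}\Delta)}$ is carried along unchanged.

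Finally I would clean up the block-size prefactors using the trivial bounds $|\Bhat_i| \le n$ and $|\Bhat_j| \le n$, so that $32|\Bhat_i|^2|\Bhat_j|^2 + 32|\Bhat_i||\Bhat_j| \le 32n^4 + 32n^2 \le 64n^4$, which produces exactly the stated estimate. This argument is essentially bookkeeping: the only points needing care are (i) splitting the slack as $8\Delta^{1/2}L^{1/4} + 8\Delta^{1/2}L^{1/4} = 16\Delta^{1/2}L^{1/4}$ so the two lemmas can be applied with their own radii, and (ii) noting that the standing hypothesis $\Delta^2 < \delta^2 L/4$ required by Lemma~\ref{lemma:bound wbar and w} is assumed in force, so that lemma is indeed applicable. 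I do not expect any genuine obstacle here — all of the analytic content lives in the two lemmas already proved, and this statement merely packages them into a bound on $|\what_{ij}-w_{ij}|$ suitable for the MSE/MAE computation in \thref{thm:consistency}.
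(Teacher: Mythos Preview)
Your proposal is correct and matches the paper's proof essentially line for line: the paper also decomposes via $\wbar_{ij}$ using the triangle inequality, applies Lemma~\ref{lemma:bound wbar and w} and Lemma~\ref{lemma:bound what and wbar} with the $8\Delta^{1/2}L^{1/4}+8\Delta^{1/2}L^{1/4}$ split, and then absorbs $32|\Bhat_i|^2|\Bhat_j|^2 + 32|\Bhat_i||\Bhat_j|$ into $64n^4$. There is nothing further to add.
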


\begin{proof}
By Lemma \ref{lemma:bound wbar and w} and Lemma \ref{lemma:bound what and wbar}, we have
\begin{align*}
\Pr\left[ |\what_{ij} - \wbar_{ij}| > 8\Delta^{1/2} L^{1/4} \right]  &\le 2 e^{-256 (T |\Bhat_i|\,|\Bhat_j| \sqrt{L} \Delta)} + 32 |\Bhat_i|^2 |\Bhat_j|^2 e^{-\frac{S\Delta^4}{32/T + 8\Delta^2/3}}\\
\Pr\left[ \wbar_{ij} - w_{ij} > 8\Delta^{1/2} L^{1/4} \right]         &\le 32|\Bhat_i||\Bhat_j| e^{-\frac{S\Delta^4}{32/T + 8\Delta^2/3}}.
\end{align*}
Therefore, it follows that
\begin{align*}
\Pr\left[ |\what_{ij} - w_{ij}| > 16\Delta^{1/2} L^{1/4} \right]
&\le \Pr\left[ |\what_{ij} - \wbar_{ij}| > 8\Delta^{1/2} L^{1/4} \right] + \Pr\left[ \wbar_{ij} - w_{ij} > 8\Delta^{1/2} L^{1/4} \right]\\
&\le 2 e^{-256 (T |\Bhat_i|\,|\Bhat_j| \sqrt{L} \Delta)} + 32 |\Bhat_i|^2 |\Bhat_j|^2 e^{-\frac{S\Delta^4}{32/T + 8\Delta^2/3}} + 32|\Bhat_i| |\Bhat_j|  e^{-\frac{S\Delta^4}{32/T + 8\Delta^2/3}}\\
&\le 2 e^{-256 (T |\Bhat_i|\,|\Bhat_j| \sqrt{L} \Delta)} + 64 n^4 e^{-\frac{S\Delta^4}{32/T + 8\Delta^2/3}}.
\end{align*}
\end{proof}

\begin{lemma}\label{eq:bound over E}
Let $E$ be a subset of the edge set $E_0 = \{(i,j) \;|\; i \in \{1,\ldots,n\}, j \in \{1,\ldots,n\} \}$. Then under the above setup, there exists constants $c_0$ and $c_1$ such that
\begin{equation}
\Pr\left[ \frac{1}{|E|} \sum_{i_v, j_v \in E} |w(u_{i_v},u_{j_v}) - \what_{ij}| > c_0 \sqrt{\Delta} \right]
\le \sum_{i_v, j_v \in E} 2 e^{-c_1 (T |\Bhat_i|\,|\Bhat_j| \Delta)} + 64 |E| n^4 e^{-\frac{S\Delta^4}{32/T + 8\Delta^2/3}}.
\end{equation}
\end{lemma}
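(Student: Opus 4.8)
The plan is to derive this as an immediate corollary of Lemma~\ref{lemma:bound w and what} by an averaging argument followed by a union bound. First I would fix the absolute constants $c_0 = 16 L^{1/4}$ and $c_1 = 256\sqrt{L}$ (both depending only on the fixed Lipschitz constant $L$ of the graphon), so that the threshold $c_0\sqrt{\Delta}$ matches the quantity $16\Delta^{1/2}L^{1/4}$ appearing in Lemma~\ref{lemma:bound w and what}, and the exponent $c_1\bigl(T|\Bhat_i|\,|\Bhat_j|\Delta\bigr)$ matches $256\bigl(T|\Bhat_i|\,|\Bhat_j|\sqrt{L}\Delta\bigr)$.

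The key observation is that an average exceeding a threshold forces at least one summand to exceed it: if $\tfrac{1}{|E|}\sum_{(i_v,j_v)\in E}|w(u_{i_v},u_{j_v})-\what_{ij}| > c_0\sqrt{\Delta}$, then there must be some pair $(i_v,j_v)\in E$ with $|w(u_{i_v},u_{j_v})-\what_{ij}| > c_0\sqrt{\Delta}$, since otherwise every term, and hence the average, would be at most $c_0\sqrt{\Delta}$. Therefore the event in question is contained in $\bigcup_{(i_v,j_v)\in E}\bigl\{|w(u_{i_v},u_{j_v})-\what_{ij}| > c_0\sqrt{\Delta}\bigr\}$. Applying the union bound over the pairs in $E$ and then invoking Lemma~\ref{lemma:bound w and what} for each pair (with $\Bhat_i$, $\Bhat_j$ the blocks returned by Algorithm~\ref{alg:SBA} containing $i_v$ and $j_v$) gives
\begin{align*}
\Pr\left[\frac{1}{|E|}\sum_{(i_v,j_v)\in E}|w(u_{i_v},u_{j_v})-\what_{ij}| > c_0\sqrt{\Delta}\right]
&\le \sum_{(i_v,j_v)\in E}\Pr\left[|w(u_{i_v},u_{j_v})-\what_{ij}| > 16\Delta^{1/2}L^{1/4}\right]\\
&\le \sum_{(i_v,j_v)\in E}\left(2e^{-256(T|\Bhat_i|\,|\Bhat_j|\sqrt{L}\Delta)} + 64n^4 e^{-\frac{S\Delta^4}{32/T+8\Delta^2/3}}\right),
\end{align*}
and pulling the $n$-dependent but pair-independent second term out of the sum (it appears $|E|$ times) yields exactly the claimed bound with $c_0 = 16L^{1/4}$ and $c_1 = 256\sqrt{L}$.

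There is no genuine obstacle here; the statement is essentially bookkeeping on top of Lemma~\ref{lemma:bound w and what}. The only points requiring care are (i) making the constant matching explicit so that $c_0,c_1$ are genuinely absolute, and (ii) being precise that the blocks $\Bhat_i,\Bhat_j$ in the bound are those produced by the greedy clustering, so Lemma~\ref{lemma:bound w and what} applies verbatim to each pair. It is worth remarking that this per-edge-set bound is exactly the form used afterwards to control $\E[\MAE(\what)]$ (take $E=E_0$ and bound a probability-one-minus event by integrating the tail), which, combined with Theorem~\ref{thm:number of blocks} to handle the randomness in $K$ and the block sizes, drives the proof of Theorem~\ref{thm:consistency}.
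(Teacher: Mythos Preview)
Your proposal is correct and follows essentially the same approach as the paper: fix $c_0=16L^{1/4}$, $c_1=256\sqrt{L}$, use the averaging/union-bound observation to reduce to per-pair tail events, apply Lemma~\ref{lemma:bound w and what} to each pair, and collect the $|E|$ identical second terms. The paper's written proof is terser (and in fact inserts a spurious $\tfrac{1}{|E|}$ factor in the intermediate step that is silently dropped), but the logic is identical to yours.
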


\begin{proof}
From Lemma \ref{lemma:bound w and what}, average over all pairs $(i_v,j_v) \in E$,
\begin{align*}
\Pr\left[ \frac{1}{|E|} \sum_{i_v, j_v \in E} |w(u_{i_v},u_{j_v}) - \what_{ij}| > 16\Delta^{1/2} L^{1/4} \right]
&\le \frac{1}{|E|} \sum_{i_v, j_v \in E} \Pr\left[  |w(u_{i_v},u_{j_v}) - \what_{ij} > 16\Delta^{1/2} L^{1/4} |\right] \\
&\le \sum_{i_v, j_v \in E} 2 e^{-256 (T |\Bhat_i|\,|\Bhat_j| \sqrt{L} \Delta)} + 64 |E| n^4 e^{-\frac{S\Delta^4}{32/T + 8\Delta^2/3}}.
\end{align*}
Choosing $c_0 = 16L^{1/4}$ and $c_1 = 256\sqrt{L}$ yields the desired result.
\end{proof}

\begin{lemma}
\label{lemma:consistency1}
Let $I_k = [\alpha_{k-1},\alpha_k]$ for $k = 1, \ldots,K$ be a sequence of intervals such that $I_i \cap I_j = \emptyset$ and $\cup I_i = [0,1]$. Suppose that $w$ is piecewise Lipschitz continuous and differentiable in $I_k$. For any $u_i,u_j \in [0,1]$, define
\begin{align*}
f_{ij}(x) &= \left(w(x,u_i) - w(x,u_j)\right)^2\\
g_{ij}(y) &= \left(w(u_i,y) - w(u_j,y)\right)^2,
\end{align*}
and
\begin{align*}
h_{ij}(x,y) = \frac{1}{2}\left[ f_{ij}(x) + g_{ij}(y) \right].
\end{align*}

Let $\delta = \min\limits_{k = 1,\ldots,K} |\alpha_k - \alpha_{k-1}|$. If
\begin{align*}
d_{ij}^c = \int_0^1 f_{ij}(x) dx \le \frac{\epsilon^2}{8L}, \quad\quad \mbox{and}\quad\quad d_{ij}^r = \int_0^1 g_{ij}(y) dy \le \frac{\epsilon^2}{8L},
\end{align*}
for some constant $0 < \epsilon < 2\delta L$, then
\begin{align*}
\sup_{x \in [0,1]} f_{ij}(x) \le \epsilon, \quad\quad \mbox{and}\quad\quad \sup_{y \in [0,1]} g_{ij}(y) \le \epsilon.
\end{align*}
Hence, $\sup\limits_{(x,y) \in [0,1]^2} h_{ij}(x,y) \le \epsilon$.
\end{lemma}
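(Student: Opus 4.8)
The plan is to prove the contrapositive of each of the two one-dimensional claims separately; once $\sup_x f_{ij}(x)\le\epsilon$ and $\sup_y g_{ij}(y)\le\epsilon$ are in hand, the bound on $h_{ij}$ is immediate from $h_{ij}(x,y)=\tfrac12[f_{ij}(x)+g_{ij}(y)]\le\tfrac12[\epsilon+\epsilon]$. The argument in the two coordinates is identical, so it suffices to show: if $\sup_x f_{ij}(x)>\epsilon$, then $d_{ij}^c>\epsilon^2/(8L)$.

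First I would record that $f_{ij}$ is Lipschitz with constant $4L$ on each block $I_k$. Writing $\phi(x)=w(x,u_i)-w(x,u_j)$, the piecewise-Lipschitz assumption on $w$ applied in the first coordinate gives $|\phi(x_1)-\phi(x_2)|\le 2L|x_1-x_2|$ for $x_1,x_2\in I_k$, while $w\in[0,1]$ gives $|\phi(x_1)+\phi(x_2)|\le 2$; hence $|f_{ij}(x_1)-f_{ij}(x_2)|=|\phi(x_1)-\phi(x_2)|\,|\phi(x_1)+\phi(x_2)|\le 4L|x_1-x_2|$. (Differentiability is not strictly needed, but the same bound also follows from $|f_{ij}'|=2|\phi\phi'|\le 4L$.) Consequently, if $f_{ij}(x_0)>\epsilon$ for some $x_0$ in a block $I_k$, then $f_{ij}(x)\ge f_{ij}(x_0)-4L|x-x_0|>\epsilon-4L|x-x_0|$ for every $x\in I_k$.

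The step I expect to be the crux is handling the case where $x_0$ lies near an endpoint of $I_k$, so that the one-sided neighbourhood on which the linear lower bound is useful might be truncated by the block boundary. This is exactly where the hypothesis $0<\epsilon<2\delta L$ is used: it forces $\epsilon/(4L)<\delta/2$, so a pigeonhole argument applies — if both $x_0+\epsilon/(4L)>\alpha_k$ and $x_0-\epsilon/(4L)<\alpha_{k-1}$ held, then $|I_k|=\alpha_k-\alpha_{k-1}<\epsilon/(2L)<\delta$, contradicting $|I_k|\ge\delta$. Hence at least one of the intervals $[x_0,x_0+\epsilon/(4L)]$ and $[x_0-\epsilon/(4L),x_0]$ lies entirely inside $I_k$. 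Integrating the linear lower bound over such an interval, and using $f_{ij}\ge 0$ on the rest of $[0,1]$, gives
\begin{equation*}
d_{ij}^c=\int_0^1 f_{ij}(x)\,dx \ \ge\ \int_{0}^{\epsilon/(4L)}\bigl(\epsilon-4Lt\bigr)\,dt \ =\ \frac{\epsilon^2}{8L},
\end{equation*}
with strict inequality since $f_{ij}(x)>\epsilon-4L|x-x_0|$ on a set of positive measure. This contradicts $d_{ij}^c\le\epsilon^2/(8L)$, so $\sup_x f_{ij}(x)\le\epsilon$; the same argument in the second coordinate yields $\sup_y g_{ij}(y)\le\epsilon$, and then $\sup_{(x,y)}h_{ij}(x,y)\le\epsilon$, completing the proof.
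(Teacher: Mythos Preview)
Your proof is correct and follows essentially the same approach as the paper's: both arguments establish that $f_{ij}$ is $4L$-Lipschitz on each block $I_k$, locate a (near-)maximizer $x_0$ of $f_{ij}$, use the Lipschitz bound to get a linear lower envelope $f_{ij}(x)\ge f_{ij}(x_0)-4L|x-x_0|$, and integrate this over a one-sided interval of length $\epsilon/(4L)$ contained in $I_k$ to obtain the $\epsilon^2/(8L)$ bound. The only differences are cosmetic: you phrase it as a clean contrapositive and handle the boundary case with an explicit pigeonhole argument, whereas the paper derives a direct inequality $f_{ij}^{\sup}\le 4L d_{ij}^c/\epsilon+\epsilon/2$ via a limiting parameter $\theta\to 0$ and a ``without loss of generality'' choice of side.
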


\begin{proof}
Since $h_{ij}(x,y)$ is separable, it is sufficient to prove for $f_{ij}(x)$.

Fix $i$ and $j$, and let $f_{ij}^{sup} = \sup\limits_{x \in [0,1]} f_{ij}(x)$. Let $I_k = [\alpha_{k-1},\;\alpha_k]$ be the interval such that $f_{ij}^{sup}$ is attained, and let $\delta_k = |\alpha_k - \alpha_{k-1}|$ be the width of the interval. Consider a neighborhood surrounding the center of $I_k$ with radius $\delta_k/2 - \theta$, where $0 < \theta < \delta_k/2$. Then define
\begin{align*}
f_{ij}^{sup}(\theta) = \sup\limits_{x \in [\alpha_{k-1}+\theta, \alpha_k-\theta]} f_{ij}(x).
\end{align*}
It is clear that $f_{ij}^{sup} = \lim\limits_{\theta \rightarrow 0} f_{ij}^{sup}(\theta)$.

The set $[\alpha_{k-1}+\theta, \alpha_k-\theta]$ is compact, so there exists $x^{max}_{ij}(\theta) \in [\alpha_{k-1}+\theta, \alpha_k-\theta]$ such that $f_{ij}^{sup} = f_{ij}(x^{max}_{ij})$. Assume, without lost of generality, that $x^{max}_{ij}(\theta) + \delta_k/2 - \theta$ (\emph{i.e.}, $x^{max}_{ij}$ is in the lower half of the interval). For any $0 < \epsilon_0 < \frac{\epsilon}{4L} - \theta \le \frac{\delta}{2}-\theta \le \frac{\delta_k}{2} - \theta$,
$$\frac{h_{ij}(x_{ij}^{max}(\theta)) - h_{ij}(x_{ij}^{max}(\theta)+\epsilon_0)}{\epsilon_0}= $$
$$\frac{(w(i,x_{ij}^{max})-w(j,x_{ij}^{max}))^2-(w(i,x_{ij}^{max}(\theta)+\epsilon_0)-w(j,x_{ij}^{max}(\theta)+\epsilon_0))^2}{\epsilon_0}\leq$$ 
$$ \frac{(w(i,x_{ij}^{max})-w(j,x_{ij}^{max}))^2-(w(i,x_{ij}^{max})+L\epsilon_0-w(j,x_{ij}^{max})+L\epsilon_0)^2}{\epsilon_0}\leq$$
$$ {4L (w(j,x_{ij}^{max})-w(i,x_{ij}^{max}))}\leq 4L\Rightarrow$$

\begin{align*}
\frac{f_{ij}(x_{ij}^{max}(\theta)) - f_{ij}(x_{ij}^{max}(\theta) + \epsilon_0)}{\epsilon_0} \le 4L,
\end{align*}
which implies that
\begin{align*}
f_{ij}(x_{ij}^{max}(\theta)) - 4L \epsilon_0 \le f_{ij}(x_{ij}^{max}(\theta)+\epsilon_0).
\end{align*}
Integrating both sides with respect to $\epsilon_0$ with limits $0$ and $\frac{\epsilon}{4L}-\theta$ yields
\begin{align*}
f_{ij}(x_{ij}^{max}(\theta))\left(\frac{\epsilon}{4L} - \theta\right) - \frac{4L}{2}\left(\frac{\epsilon}{4L}-\theta\right)^2
&\le \int_0^{\frac{\epsilon}{4L}-\theta} f_{ij}(x_{ij}^{max}(\theta)+\epsilon_0) d\epsilon_0 \\
&\le \int_0^1 f_{ij}(x) dx = d_{ij}^c.
\end{align*}
Therefore,
\begin{align*}
f_{ij}(x_{ij}^{max}(\theta)) \le \frac{d_{ij}^c}{\frac{\epsilon}{4L}-\theta} + 2L \left(\frac{\epsilon}{4L} - \theta\right),
\end{align*}
and hence
\begin{align*}
f_{ij}^{sup} = \lim_{\theta \rightarrow 0} f_{ij}^{sup}(\theta) = \lim_{\theta \rightarrow 0} f_{ij}(x_{ij}^{max}(\theta)) \le \frac{4Ld_{ij}^c}{\epsilon} + \frac{\epsilon}{2}.
\end{align*}
It then follows that if $d_{ij}^c \le \frac{\epsilon^2}{8L}$, then $f_{ij}^{sup} \le \epsilon$.
\end{proof}

\begin{definition}
\label{def:MSE and MAE}
The mean squared error (MSE) and mean absolute error (MAE) are defined as
\begin{align}
\MSE(\what) &= \frac{1}{n^2} \sum_{i_v = 1}^n \sum_{j_v = 1}^n \left( w(u_{i_v}, u_{j_v}) - \what_{i_v, j_v} \right)^2\\
\MAE(\what) &= \frac{1}{n^2} \sum_{i_v = 1}^n \sum_{j_v = 1}^n \left| w(u_{i_v}, u_{j_v}) - \what_{i_v, j_v} \right|.
\end{align}
\end{definition}

\begin{theorem}
If $S \in \Theta(n)$ and $\Delta_n \in \omega \left( \left(\frac{\log(n)}{n}\right)^{\frac{1}{4}} \right) \cap o(1)$, then
\begin{equation}
\lim_{n\rightarrow \infty} \E[ \MAE(\what)] = 0 \quad\quad\mbox{and}\quad\quad \lim_{n\rightarrow \infty} \E[ \MSE(\what)] = 0.
\end{equation}
\label{thm:consistency}
\end{theorem}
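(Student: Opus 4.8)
The plan is to deduce the MSE claim from the MAE claim and to prove $\E[\MAE(\what)]\to 0$ by splitting $\MAE(\what)$ into a deterministically small ``main'' part of order $\sqrt{\Delta_n}$ and a ``bad'' part whose probability is driven to zero by Lemmas~\ref{lemma:bound wbar and w}--\ref{eq:bound over E}. Since $w$ and $\what$ take values in $[0,1]$, we have $(w(u_{i_v},u_{j_v})-\what_{i_v,j_v})^2\le|w(u_{i_v},u_{j_v})-\what_{i_v,j_v}|$ and $|w-\what|\le1$ pointwise, so $\MSE(\what)\le\MAE(\what)\le1$ and it is enough to show $\E[\MAE(\what)]\to 0$. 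First I would condition on the event $A=\{K\le QL\sqrt{2}/\Delta_n\}$ that the number of estimated blocks is controlled: by Theorem~\ref{thm:number of blocks}, $\Pr[A^c]\le 8n^2 e^{-S\Delta_n^4/(128/T+16\Delta_n^2/3)}$, and since $S\in\Theta(n)$, $\Delta_n\in o(1)$, and $n\Delta_n^4/\log n\to\infty$, the exponent is $\omega(\log n)$, so $\Pr[A^c]\to 0$; because $\MAE(\what)\le1$, it then remains to bound $\E[\MAE(\what)\mathbf{1}_A]$.

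On $A$, I would regroup the $n^2$ summands of $\MAE(\what)$ by the block pair $(\Bhat_a,\Bhat_b)$ containing $(i_v,j_v)$, writing $\MAE(\what)=\frac1{n^2}\sum_{a,b}|\Bhat_a||\Bhat_b|\,\overline{\mathrm{err}}_{ab}$, where $\overline{\mathrm{err}}_{ab}\in[0,1]$ is the average of $|w(u_{i_x},u_{j_x})-\what_{ab}|$ over $i_x\in\Bhat_a$, $j_x\in\Bhat_b$. Using $\overline{\mathrm{err}}_{ab}\le 16\Delta_n^{1/2}L^{1/4}+\mathbf{1}\{\overline{\mathrm{err}}_{ab}>16\Delta_n^{1/2}L^{1/4}\}$ and $\sum_a|\Bhat_a|=n$ gives $\MAE(\what)\le 16\Delta_n^{1/2}L^{1/4}+\frac1{n^2}\sum_{a,b}|\Bhat_a||\Bhat_b|\,\mathbf{1}\{\overline{\mathrm{err}}_{ab}>16\Delta_n^{1/2}L^{1/4}\}$. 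Taking expectation conditionally on the clustering and applying Lemma~\ref{eq:bound over E} (which rests on Lemma~\ref{lemma:bound w and what}) with $E=\Bhat_a\times\Bhat_b$, the probability that $\overline{\mathrm{err}}_{ab}$ exceeds $16\Delta_n^{1/2}L^{1/4}$ is at most $2|\Bhat_a||\Bhat_b|\,e^{-256T\sqrt{L}\,\Delta_n|\Bhat_a||\Bhat_b|}+64|\Bhat_a||\Bhat_b|\,n^4 e^{-S\Delta_n^4/(32/T+8\Delta_n^2/3)}$. Plugging this in, using $\sup_{x>0}x^2e^{-cx}=4/(e^2c^2)$ to bound each factor $|\Bhat_a|^2|\Bhat_b|^2e^{-256T\sqrt{L}\Delta_n|\Bhat_a||\Bhat_b|}$ by $O(\Delta_n^{-2})$, using $\sum_{a,b}|\Bhat_a|^2|\Bhat_b|^2\le n^4$, and using $K^2\le 2Q^2L^2/\Delta_n^2$ on $A$, one obtains a bound of the shape
\begin{equation*}
\E[\MAE(\what)\mathbf{1}_A]\;\le\;16\Delta_n^{1/2}L^{1/4}+\frac{c_2}{n^2\Delta_n^4}+c_3\,n^6\,e^{-S\Delta_n^4/(32/T+8\Delta_n^2/3)},
\end{equation*}
with $c_2,c_3$ depending only on $L,Q,T$. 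Each term vanishes as $n\to\infty$: the first because $\Delta_n\to 0$; the second because $n\Delta_n^4/\log n\to\infty$ forces $n^2\Delta_n^4\to\infty$; the third because its exponent is $\omega(\log n)$ and hence dominates $6\log n$. Together with $\Pr[A^c]\to 0$ this yields $\E[\MAE(\what)]\to 0$, and then $\E[\MSE(\what)]\le\E[\MAE(\what)]\to 0$.

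The step I expect to be the main obstacle is controlling block pairs in which one or both of $\Bhat_a,\Bhat_b$ is small: Algorithm~\ref{alg:SBA} gives no lower bound on any individual $|\Bhat_a|$, so the Hoeffding-type factor $e^{-256T\sqrt{L}\Delta_n|\Bhat_a||\Bhat_b|}$ produced by the histogram step need not be small for such pairs. The resolution has two parts --- small blocks carry proportionally little mass in the weighting $\frac1{n^2}\sum_{a,b}|\Bhat_a||\Bhat_b|(\cdot)$, and the elementary inequality $x^2e^{-cx}\le 4/(e^2c^2)$ converts the badly behaved summand into an $O(\Delta_n^{-2})$ contribution per block pair, which combined with the $O(\Delta_n^{-2})$ bound on the number of block pairs (Theorem~\ref{thm:number of blocks}) and the $1/n^2$ prefactor produces the harmless $O(1/(n^2\Delta_n^4))$ term above; this is precisely where the hypothesis $\Delta_n\in\omega((\log n/n)^{1/4})$, rather than merely $\Delta_n\to 0$, is used. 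A secondary point requiring care is that the clustering and the histogram $\what$ are built from the same observed graphs, so Lemmas~\ref{lemma:bound wbar and w}--\ref{eq:bound over E} must be invoked conditionally on the realized clustering, consistent with their being stated for clusters returned by the algorithm.
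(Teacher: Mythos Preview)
Your proposal is correct and rests on the same lemmas as the paper, but you handle the ``small block'' obstacle differently. The paper's proof partitions the vertex set into $V_n$ (vertices lying in blocks of size below $n\Delta_n^2/(QL\sqrt{2})$) and its complement $\Vbar_n$; on the event $A$ one has $|V_n|/n\le\Delta_n$, so any pair touching $V_n$ contributes at most $3\Delta_n$ to $\MAE$, while on $\Vbar_n\times\Vbar_n$ every block has size at least $n\Delta_n^2/(QL\sqrt{2})$, so the Hoeffding factor $e^{-c_1T|\Bhat_i||\Bhat_j|\Delta_n}$ is already tiny and Lemma~\ref{eq:bound over E} applied to $E=\Vbar_n\times\Vbar_n$ finishes. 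You instead avoid any case split: you keep the weighting $\frac1{n^2}\sum_{a,b}|\Bhat_a||\Bhat_b|(\cdot)$, bound each summand via the calculus inequality $x^2e^{-cx}\le 4/(e^2c^2)$, and then use the $O(\Delta_n^{-2})$ bound on the number of block pairs from Theorem~\ref{thm:number of blocks}. This is a cleaner, purely analytic route that sidesteps the combinatorial accounting of small blocks; the paper's route, on the other hand, makes the role of block sizes more transparent and yields a slightly sharper residual ($3\Delta_n$ versus your $O(1/(n^2\Delta_n^4))$ for the small-block contribution, though both vanish under the hypotheses). Two further minor differences: the paper first proves $\MAE(\what)\to 0$ in probability and then upgrades to expectation via boundedness, whereas you bound $\E[\MAE(\what)\mathbf{1}_A]$ directly; and you deduce the $\MSE$ claim immediately from $\MSE\le\MAE$, while the paper repeats the argument for $\MSE$ in parallel.
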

\begin{proof}
Suppose that the algorithm is executed for a set of observed graphs with $n$ vertices using parameters $\Delta_n$ and $S$. Let $K'_n$ be the number of blocks generated. Assume that, as $n \rightarrow \infty$, the parameters satisfy $S \in \Theta(n)$ and $\Delta_n \in \omega \left( \left(\frac{\log(n)}{n}\right)^{\frac{1}{4}} \right) \cap o(1)$.

The proof is based on \eref{eq:bound over E}. The intuition is to that that the two terms $\sum_{i_v, j_v \in E} 2 e^{-c_1 (T |\Bhat_i|\,|\Bhat_j| \Delta)}$ and $32 |E| n^4 e^{-\frac{S\Delta^4}{16/T + 8\Delta^2/3}}$ vanish as $n \rightarrow \infty$. The latter is clear if $S \in \Theta(n)$ and $\Delta_n \in \omega \left( \left(\frac{\log(n)}{n}\right)^{\frac{1}{4}} \right) \cap o(1)$. For the first term, it is necessary to consider the size $|E|$, which is the size of the cluster generated. We show that the number of small clusters is asymptotically irrelevant. Most of the error come from vertices whose cluster is large enough to make $e^{-\frac{S\Delta^4}{32/T + 8\Delta^2/3}}$ vanish.

From Theorem \ref{thm:number of blocks}, we have
\begin{equation*}
\Pr\left[ K' > \frac{QL\sqrt{2}}{\Delta_n} \right] \le 8n^2 e^{-\frac{S\Delta_n^4}{128/T + 16\Delta_n^2/3}}.
\end{equation*}
Let $\calE_n$ be the event that $K_n' \le QL\sqrt{2}/\Delta_n$. Then $\lim_{n\rightarrow \infty} \Pr[\calE_n] = 1$.

Suppose $\calE_n$ happens and define $r_n$ as the number of blocks with less than $\frac{n\Delta_n^2}{QL\sqrt{2}}$ elements. Let $V_n$ be the union of these blocks, and define $\Vbar_n$ be the complement of $V_n$. Then,
\begin{align*}
|V_n| \le r_n \frac{n\Delta_n^2}{QL\sqrt{2}} \le K_n' \frac{n\Delta_n^2}{QL\sqrt{2}} \le n\Delta_n.
\end{align*}
So, $|V_n|/n \le \Delta_n$.

Now, let's consider MAE.
\begin{align*}
\MAE &= \frac{1}{n^2} \sum_{i_v \in V} \sum_{j_v \in V} \left| w(u_{i_v}, u_{j_v}) - \what_{i_v, j_v} \right|\\
&= \frac{1}{n^2} \sum_{i_v \in V_n} \sum_{j_v \in V_n} \left| w(u_{i_v}, u_{j_v}) - \what_{i_v, j_v} \right| +
\frac{1}{n^2}\sum_{i_v \in \Vbar_n} \sum_{j_v \in \Vbar_n} \left| w(u_{i_v}, u_{j_v}) - \what_{i_v, j_v} \right| + \\
&\quad + \frac{1}{n^2}\sum_{i_v \in \Vbar_n} \sum_{j_v \in V_n} \left| w(u_{i_v}, u_{j_v}) - \what_{i_v, j_v} \right| +
\frac{1}{n^2}\sum_{i_v \in V_n} \sum_{j_v \in \Vbar_n} \left| w(u_{i_v}, u_{j_v}) - \what_{i_v, j_v} \right|\\
&\le \frac{|V_n|^2}{n^2} + \frac{|V_n|}{n}\frac{|\Vbar_n|}{n} + \frac{|\Vbar_n|}{n}\frac{|V_n|}{n}
+ \frac{1}{n^2}\sum_{i_v \in \Vbar_n} \sum_{j_v \in \Vbar_n} \left| w(u_{i_v}, u_{j_v}) - \what_{i_v, j_v} \right|\\
&\le \frac{1}{n^2}\sum_{i_v \in \Vbar_n} \sum_{j_v \in \Vbar_n} \left| w(u_{i_v}, u_{j_v}) - \what_{i_v, j_v} \right| + \Delta_n^2 + 2\Delta_n\\
&\le \frac{1}{n^2}\sum_{i_v \in \Vbar_n} \sum_{j_v \in \Vbar_n} \left| w(u_{i_v}, u_{j_v}) - \what_{i_v, j_v} \right| + 3\Delta_n.
\end{align*}
Similar result holds for MSE:
\begin{align*}
\MSE &= \frac{1}{n^2} \sum_{i_v \in V} \sum_{j_v \in V} \left( w(u_{i_v}, u_{j_v}) - \what_{i_v, j_v} \right)^2
\le \frac{1}{n^2}\sum_{i_v \in \Vbar_n} \sum_{j_v \in \Vbar_n} \left( w(u_{i_v}, u_{j_v}) - \what_{i_v, j_v} \right)^2 + 3\Delta_n.
\end{align*}

Therefore, using Lemma \ref{eq:bound over E} with $E = \Vbar_n$:
\begin{align*}
\Pr\left[ \MAE(\what) > c_0\sqrt{\Delta_n} + 3\Delta_n \;\Big|\; \calE \right]
&\le \Pr\left[ \frac{1}{n^2}\sum_{i_v \in \Vbar_n} \sum_{j_v \in \Vbar_n} \left| w(u_{i_v}, u_{j_v}) - \what_{i_v, j_v} \right| + 3\Delta_n > c_0\sqrt{\Delta_n} + 3\Delta_n \;\Big|\; \calE \right]\\
&\le \frac{1}{\Pr[\calE]} \Pr\left[ \frac{1}{|\Vbar_n|^2}\sum_{i_v \in \Vbar_n} \sum_{j_v \in \Vbar_n} \left| w(u_{i_v}, u_{j_v}) - \what_{i_v, j_v} \right|  > c_0\sqrt{\Delta_n} \;\Big|\; \calE \right]\\
&\le \frac{1}{\Pr[\calE]} \left( \sum_{i_v \in \Vbar_n}\sum_{j_v \in \Vbar_n} 2 e^{-256 (T |\Bhat_i|\,|\Bhat_j| \sqrt{L} \Delta)} + 64 |\Vbar_n| n^4 e^{-\frac{S\Delta^4}{32/T + 8\Delta^2/3}} \right).
\end{align*}
and
\begin{align*}
\Pr\left[ \MSE(\what) > c_0\sqrt{\Delta_n} + 3\Delta_n \;\Big|\; \calE \right]
&\le \frac{1}{\Pr[\calE]} \left( \sum_{i_v \in \Vbar_n}\sum_{j_v \in \Vbar_n} 2 e^{-256 (T |\Bhat_i|\,|\Bhat_j| \sqrt{L}\Delta)} + 64 |\Vbar_n| n^4 e^{-\frac{S\Delta^2}{32/T + 8\Delta/3}} \right).
\end{align*}

So,
\begin{align*}
\lim_{n \rightarrow \infty} \Pr\left[ \MAE(\what) > c_0\sqrt{\Delta_n} + 3\Delta_n \;\Big|\; \calE \right] \Pr\left[\calE\right] = 0.
\end{align*}
Since $\lim_{n \rightarrow \infty} \Delta_n = 0$ and $\lim_{n \rightarrow \infty} \Pr[\calE_n] = 1$, it holds that for any $\epsilon > 0$,
\begin{align*}
\lim_{n\rightarrow \infty} \Pr[\MAE(\what) > \epsilon ] = 0.
\end{align*}

Finally, since $\what$ is bounded in $[0,1]$,
\begin{align*}
\E[\MAE(\what)] \le \epsilon \Pr[\MAE(\what) \le \epsilon] + \Pr[\MAE(\what) > \epsilon].
\end{align*}
Sending $\epsilon \rightarrow \infty$,
\begin{align*}
\lim_{n\rightarrow \infty} \E[\MAE(\what)] \le \lim_{n\rightarrow \infty} \Pr[\MAE(\what) > \epsilon] = 0.
\end{align*}
Same arguments hold for $\MSE$.
\end{proof}

\end{document}